\documentclass[draft,twoside]{article}
\newcommand{\mytitle}{The Hirota Identity for Hyperpfaffian $\tau$-Functions in Charge-$L$ Ensembles} 
\newcommand{\keywords}{Hyperpfaffian, Beta Ensembles, Hirota Identity}
\newcommand{\msc}{
15B52, 
60G55,  
82B23, 
15A15, 
}

\usepackage{amsmath,amssymb,
  amsthm,amsfonts,amscd,mathrsfs,pifont,upgreek} 
\usepackage{eucal}  
\usepackage{verbatim}
\usepackage{ifpdf}
\usepackage[colorlinks=true, citecolor=blue]{hyperref}

\ifpdf
        \usepackage[pdftex]{graphicx}
\else
        \usepackage[dvips]{graphicx}
\fi

\newtheorem{thm}{Theorem}[section]

\newtheorem{lemma}[thm]{Lemma}

\newtheorem{thm*}{Theorem}[]
\newtheorem{cor*}[thm*]{Corollary}
\newtheorem{claim*}[thm*]{Claim}
\newtheorem{lemma*}[thm*]{Lemma}
\newtheorem{prop*}[thm*]{Proposition}
\newtheorem{conj*}[thm*]{Conjecture}

\theoremstyle{definition}

\newtheorem{problem*}{Problem}[section]

\newtheorem{question*}{Question}[section]

\newtheorem{defn*}{Definition}

\theoremstyle{remark}
\newtheorem*{rem}{Remark}

\newcommand{\qq}[1]{\qquad \mbox{#1} \qquad}

\newcommand{\BB}[1]{\ensuremath{\mathbb{#1}}}

\newcommand{\R}{\ensuremath{\BB{R}}}

\DeclareMathOperator{\sgn}{sgn}

\DeclareMathOperator{\PF}{PF}

\usepackage[margin=1 in]{geometry}
\usepackage{lineno}

\numberwithin{equation}{section} 

\numberwithin{equation}{section}

\begin{document}
\title{\bfseries\sffamily \mytitle}  
\author{\sc Christopher D.~Sinclair}
\maketitle

\begin{abstract}
We study log-gas ensembles with inverse temperature $\beta = L^2$ using a
confluent Vandermonde representation that admits a formulation in the exterior
algebra of a finite-dimensional vector space. By interpreting the system as
consisting of finitely many particles with integer charge $L$, partition
functions can be expressed exactly as hyperpfaffians. In this formulation, the
system is governed by a natural momentum grading arising from the confluent
Vandermonde structure, and its statistical observables are determined entirely
by the corresponding bigraded commutative subalgebra. The geometric identity
that a particle's $L$-blade wedges with itself to zero produces momentum
Pl\"ucker relations within this algebra. These relations generate momentum
transport identities between sectors of different particle number. Upon
introducing dynamic time variables, the partition functions become
$\tau$-functions, and these transport identities are transformed into Hirota
bilinear equations. This provides an explicit algebraic origin for the
integrable hierarchy structure of the $\beta = L^2$ ensembles, which may be
viewed as a finite-dimensional analogue of the Sato Grassmannian formulation of
integrable systems.
\end{abstract}

{\bf MSC2010:} \msc

{\bf Keywords:} \keywords
\vspace{1cm}

\section{Introduction}
\label{sec:intro}

This paper shows that the $\beta = L^2$ log-gas ensembles possess an integrable
structure: their partition functions are $\tau$-functions satisfying Hirota
bilinear equations. The key structural observation is that the confluent
Vandermonde determinant introduces another grading (which we call momentum) in
the exterior algebra formulation of the ensemble. This grading allows
computations to be carried out in a relatively low-dimensional commutative
subalgebra of the exterior algebra—the momentum algebra—whose structure
coefficients satisfy Plücker relations.

The integrable structure then follows from a purely algebraic mechanism. Each
particle is represented by an $L$-blade in the momentum algebra, and the
identity that a blade wedges with itself to zero produces Plücker relations.
These relations generate momentum transport identities between particle systems
with different particle numbers, and when written in dynamic Miwa coordinates
they take the form of Hirota bilinear equations. 

\subsection{Log-Gases and $\beta$-Ensembles}
\label{subsec:intro.log-gases}

Log-gases, or Coulomb gases, describe systems of particles on the real line
interacting through a logarithmic pair potential and confined by an external
field. The joint density of particle positions in such systems takes the form
\[
\frac{1}{Z_M}
\prod_{1 \le i < j \le M} |x_i - x_j|^{\beta}
\prod_{i=1}^M w(x_i),
\]
where $w(x)$ is a weight determined by the confining potential and $\beta > 0$
is the inverse temperature. These models play a central role in random matrix
theory, statistical mechanics, and integrable systems. For the classical values
$\beta = 1, 2, 4$, the ensembles exhibit Pfaffian or determinantal structure,
leading to exact formulas for partition functions and correlation functions in
terms of orthogonal or skew-orthogonal polynomials \cite{DeBruijn1955,
Dyson1962, Forrester2010, Mehta2004, TracyWidom1998}. For general $\beta$,
however, such structures are typically absent, and exact solvability becomes
more difficult \cite{DumitriuEdelman2002,RamirezRider2011}.

In this work we focus on the special class of ensembles with $\beta = L^2$.
These ensembles admit an interpretation in which each particle behaves as a
composite object built from $L$ fermionic constituents, leading naturally to
confluent Vandermonde determinants and an exterior algebra formulation. This
representation provides a natural geometric pathway to the exterior and momentum
algebra formulations developed here, circumventing the analytic obstacles
present for general $\beta.$

\subsection{Composite Particles and Confluent Vandermonde Determinants}
\label{subsec:intro.confluent-vandemonde}

When $\beta = L^2$ with $L \in \mathbb{N}$, the interaction term
\[
\prod_{1 \le i < j \le M} |x_j - x_i|^{L^2}
\]
can be interpreted as arising from a system in which each particle behaves like
a bound state of $L$ fermionic constituents, i.e., a charge-$L$ particle. In
this picture, each particle position $x_i$ is replaced by a cluster of $L$
coincident fermions, and the resulting antisymmetric structure produces higher
powers of the Vandermonde determinant.

Mathematically, this leads to confluent Vandermonde determinants, obtained by
replacing each particle coordinate by a jet of derivatives evaluated at that
coordinate \cite{Hou2015}. As a result, the partition function and correlation
functions of the $\beta = L^2$ ensemble can be written in terms of wedge
products built from these confluent Vandermonde structures. This representation
naturally places the model in an exterior algebra framework, which introduces
additional algebraic structure that will play a central role in the integrable
formulation developed in this paper.

\subsection{Exterior Algebra and Hyperpfaffian Partition Functions}
\label{subsec:intro.hyperpfaffians}

The confluent Vandermonde representation of the $\beta = L^2$ ensemble admits a
natural formulation in the exterior algebra of a finite-dimensional vector
space. In this formulation, each particle coordinate contributes an $L$-blade
constructed from derivatives of monomials evaluated at the particle position,
and an $M$-particle configuration is represented by the wedge product of these
blades. The partition function is obtained by evaluating the top-degree
component of a wedge power of a single background element encoding the moments
of the weight function.

This construction leads naturally to hyperpfaffian structure: the partition
function can be expressed as a hyperpfaffian of a tensor built from the moments
of the measure weight, generalizing the determinantal and Pfaffian formulas that
appear in the classical cases $\beta = 2$ and $\beta = 1,4$
\cite{Barvinok1995,LuqueThibon2002,Sinclair2012,Ishikawa1995}. This reduces the
many-body integral defining the partition function to a purely algebraic problem
in the exterior algebra and makes the combinatorial structure of the ensemble
explicit.

\subsection{The Momentum Algebra and Dimensional Reduction}
\label{subsec:intro.momentum-algebra}

The confluent Vandermonde construction introduces an additional grading that
plays a central role in the algebraic structure of the $\beta = L^2$ ensembles.
When a particle is replaced by a cluster of $L$ fermionic constituents, the
resulting $L$-blade is built from the Wronskian of $L$ monomials. The Wronskian
is formed from derivatives of monomials evaluated at the positions of the
fermionic constituents. These derivatives introduce degree shifts, and after
recentering the indices, each basis element in the exterior algebra may be
assigned an integer label which we interpret as a {\em momentum}. Binding $L$
fermions into a composite particle therefore introduces a new conserved quantum
number that records the degree shift relative to a centered reference
configuration.

Momentum is additive under wedge products, and the structure of the confluent
Vandermonde determinant imposes strict bounds on the allowed total momentum in
each particle sector. Consequently, only a finite range of momentum modes can
appear in the wedge products relevant to the partition function and correlation
functions. This leads to a bigraded commutative subalgebra of the exterior
algebra, which we call the {\em momentum algebra}. The observables of the
$\beta = L^2$ ensemble, including partition functions, correlation functions,
and transport relations, can be expressed entirely in terms of structure
coefficients in the momentum algebra. This construction reduces the complexity
of encoding particle backgrounds from ${LM \choose L}$ coefficients in the
exterior algebra to only $L^2(M-1)+1$ momentum coefficients in the momentum
algebra \cite{SinclairWells}.

\subsection{Momentum Pl\"ucker Relations and Transport Identities}
\label{subsec:intro.momentum-plucker}

The exterior algebra formulation implies that the basic $L$-blade associated
with a charge-$L$ particle satisfies the identity
\[
\omega(z) \wedge \omega(z) = 0,
\]
since the wedge product of any blade with itself vanishes. When expressed in the
momentum algebra, this identity produces a family of quadratic relations among
the generators of the momentum algebra, which we refer to as the {\em momentum
Pl\"ucker relations}. These relations are analogues of the classical Pl\"ucker
relations describing the Grassmannian \cite{Sato1981,MiwaJimboDate2000}, but in
the present setting they are organized by momentum conservation: each relation
corresponds to a fixed total momentum and expresses a linear dependence among
wedge products whose momenta sum to the same value.

These momentum Pl\"ucker relations impose algebraic constraints on the structure
coefficients that govern partition functions and correlation functions in the
momentum algebra. When interpreted in the particle sectors of the exterior
algebra, these constraints take the form of {\em momentum transport identities},
which describe the transport of momentum between backgrounds with different
particle numbers. These transport identities will be expressed in generating
function form where they become Hirota bilinear equations for the associated
$\tau$-functions.

\subsection{$\tau$-Functions, Miwa Coordinates and the Hirota Identity}
\label{subsec:intro.tau-miwa-hirota}

The momentum transport identities derived from the Pl\"ucker relations may be
expressed naturally in generating function form by deforming the weight function
with auxiliary time variables \cite{Miwa1982,MiwaJimboDate2000}. In this
formulation, the partition functions of charge-$L$ ensembles become $\tau$-functions
depending on an infinite sequence of times, and shifts in these times correspond
to the insertion or extraction of particles with specified momentum. Introducing
Miwa coordinates allows these time deformations to be encoded in a spectral
parameter, while an additional fugacity parameter tracks the number of particle
extractions necessary for momentum transport. In these coordinates, the momentum
transport identities take the form of bilinear residue relations for the
associated Baker--Akhiezer wave functions, yielding a Hirota bilinear equation
for the $\tau$-functions of the $\beta = L^2$ ensemble.

In the classical theory, Hirota equations arise from Plücker relations on the
infinite Grassmannian \cite{AdlerVanMoerbeke2001,HarnadOrlov2021,MiwaJimboDate2000,Sato1981}.
In the present setting, the Hirota equations instead arise from momentum Plücker
relations in a finite-dimensional exterior algebra determined by the confluent
Vandermonde structure. The integrable structure of the $\beta=L^2$ ensembles may
therefore be viewed as a finite-dimensional analogue of Sato theory.

\subsection{The Main Results}
\label{subsec:intro.main-results}

The main result of this work is the identification of a `small' commutative
subalgebra of the exterior algebra with an additional momentum grading
underlying $\beta = L^2$ ensembles, and a family of bilinear momentum transport
identities arising from Pl\"ucker relations in this algebra. When expressed in
Miwa coordinates, these transport identities become Hirota bilinear equations
for the associated $\tau$-functions formed from hyperpfaffian partition
functions, revealing an integrable structure for the $\beta = L^2$ ensembles
analogous to the KP and BKP hierarchies that appear in the classical cases
$\beta = 2$ and $\beta = 1,4$.

We emphasize that in this work we do not solve the $\beta=L^2$ ensembles in the
sense of obtaining explicit formulas for correlation kernels or asymptotics.
Rather, we establish that these ensembles are {\em integrable} in the sense that their
partition functions are $\tau$-functions satisfying Hirota bilinear equations.
This shows that the ensembles are solvable at the level of integrable systems structure,
though the explicit evaluation of the resulting $\tau$-functions remains a
separate problem.

The structure of the paper is as follows. In Section~\ref{sec:ensembles} we
define the $\beta = L^2$ log-gas model. In
section~\ref{sec:confluent-vandermonde} we review the confluent Vandermonde
representation. In Section~\ref{sec:exterior-algebra} we develop the exterior
algebra formulation and derive the hyperpfaffian expression for the partition
function. Everything through Section~\ref{sec:exterior-algebra} already appears
in the literature \cite{SinclairWells}. 

Section~\ref{sec:momentum-algebra} introduces the momentum algebra and the
associated dimensional reduction. In Section~\ref{sec:insertion-extraction} we
derive the momentum Pl\"ucker relations and the resulting transport identities.
Section~\ref{sec:tau-functions} introduces $\tau$-functions and Miwa
coordinates, and in Section~\ref{sec:hirota} we derive the Hirota bilinear
identity. We conclude in Section~\ref{sec:discussion} with a discussion of
connections to integrable hierarchies, odd $L$, correlation kernels, and
asymptotic questions.

\section{Ensembles of Charge-$L$ Particles}
\label{sec:ensembles}

We consider an ensemble of $M$ charge-$L$ particles interacting logarithmically
on the real line $\R$ in the presence of an external potential. We refer to this as the {\em
charge-$L$ log-gas ensemble.} While these ensembles are equivalent to the
$\beta$-ensembles with $\beta=L^2$, but the charge interpretation will be
convenient for the algebraic formulation developed later. We fix notation for
the Gibbs measure, partition function, and correlation functions, which will be
rewritten in subsequent sections using confluent Vandermonde determinants and
exterior algebra.

Given two particles located at $x, x' \in \R$ their
{\em pairwise interaction energy} is given by
\[
E(x, x') = -L^2 \log|x - x'|.
\]
In our interpretation, the factor $L^2$ arises as the coupling constant
representing interaction strength between two charge-$L$ particles. Given $M$
particles at locations $\mathbf x = (x_1, \ldots, x_M)$ the interaction energy
is the sum of pairwise interaction energies,
\[
E(\mathbf x) = -\sum_{1 \leq i < j 
\leq M} L^2 \log|x_j - x_i|.
\]
To confine the particles we introduce a potential $V : \R \rightarrow [0,
\infty),$ so that the total energy of a system of $M$ particles is given by
\[
E_{\mathrm{tot}}(\mathbf x) = E(\mathbf x) + \sum_{i=1}^M V(x_i).
\]

We normalize the inverse temperature $(kT)^{-1} = 1,$ and the {\em Boltzmann
factor} of the ensemble is given by
\[
e^{- (kT)^{-1} E_{\mathrm{tot}}(\mathbf x)} = 
\bigg\{ \prod_{1 \leq i < j \leq M} | x_j - x_i |^{L^2} \bigg\} \bigg\{\prod_{i=1}^M e^{-V(x_i)}\bigg\}
\]
Since the Gibbs factor depends only on the product $(kT)^{-1} E_{\mathrm{tot}}$,
the interaction exponent $L^2$ can be interpreted in two equivalent ways: either
as an inverse temperature $\beta = L^2$ for particles of unit charge, or as unit
inverse temperature for particles carrying charge $L$. This provides a natural
explanation for the special values $\beta = L^2$: they correspond to Coulomb
systems in which the particles carry integer charge. In this interpretation, the
classical orthogonal ($\beta=1$) and symplectic ($\beta=4$) ensembles correspond
to charge-$1$ and charge-$2$ ensembles within this family, and their Pfaffian
solvability appears as the $L = 1$ and $L = 2$ cases of the hyperpfaffian
solvability discussed here. The classical unitary ($\beta = 2$) ensembles instead belong
to a distinct determinantal solvability class.

If we define the weight $w(x) = e^{-V(x)}$ and the corresponding measure
$\mu(dx) = e^{-V(x)} dx,$ then the {\em Gibbs measure} of the ensemble is given
by
\[
\frac{1}{M! Z} \prod_{1 \leq i < j \leq M} | x_j - x_i |^{L^2} \, d\mu^M(\mathbf x)
\]
where $\mu^M$ is the $M$-fold product measure of $\mu$ on $\R^M,$ and
\[
Z = \frac{1}{M!} \int_{\R^M} \prod_{1 \leq i < j \leq M} | x_j - x_i |^{L^2} \, d\mu^M(\mathbf x)
\]
is the {\em partition function} for the ensemble \cite{Forrester2010,
Mehta2004}. We will see that, once dynamic coordinates (times) are introduced,
the partition function naturally acquires the structure of a $\tau$-function.

\subsection{Correlation Functions}
\label{subsec:ensembles.correlation-functions}

While the partition function captures the global statistics of the ensemble,
local particle localization is described by the correlation functions. For $1
\leq m \leq M,$ the $m$th {\em correlation function} $R_m : \R^m \rightarrow [0,
\infty)$ gives the joint marginal density for finding $m$ particles at specified
positions $x_1, \ldots, x_m$, defined as
\begin{align*}
R_m(x_1, \ldots, x_m) &= \frac{1}{(M-m)! Z} \bigg\{ \prod_{i=1}^m w(x_i) \bigg\}\bigg\{\prod_{1 \leq i < j \leq m} |x_j - x_i|^{L^2} \bigg\} \\ & \qquad \times \int_{\R^{M-m}}  \bigg\{ \prod_{1 \leq a < b \leq M-m}  |y_b - y_a|^{L^2} \bigg\} \bigg\{ \prod_{a=1}^{M-m} \prod_{i=1}^m |x_i - y_a|^{L^2} \bigg\} d\mu^{M-m}(\mathbf y).  
\end{align*}
A central goal of this work is to uncover the algebraic mechanism underlying
these local statistics. In later sections, we will reinterpret these dense
integrals using exterior algebra, where evaluating a correlation function
corresponds precisely to applying algebraic insertion operators to a fixed
background state.

In particular, the first correlation function 
\[
R_1(x) = \frac{w(x)}{(M-1)! Z} \int_{\R^{M-1}}  \bigg\{ \prod_{1 \leq a < b \leq M-1} | y_b - y_a |^{L^2} \bigg\}
\bigg\{ \prod_{a=1}^{M-1} |x - y_a|^{L^2} \bigg\} d\mu^{M-1}(\mathbf y),
\]
will later be expressed algebraically in terms of insertion operators in the
exterior algebra formulation.

The key observation underlying this work is that when $\beta = L^2$, the interaction factors
$|x_k - x_i|^{L^2}$ admit an interpretation in which each particle is replaced
by a cluster of $L$ fermionic constituents. This ``clustering'' allows the
partition function and correlation functions to be expressed in terms of confluent
Vandermonde determinants, providing the essential bridge between the statistical
mechanics of the ensemble and the exterior algebra.

\section{The Confluent Vandermonde Determinant}
\label{sec:confluent-vandermonde}

In the classical $L=1$ case, the interaction term in the Gibbs measure is given
by the absolute value of the standard Vandermonde determinant. For general charge-$L$ particles,
this interaction becomes a higher power, which we realize algebraically using a
confluent Vandermonde construction. Rather than simple scalar entries, this representation
encodes particle configurations using blocks built from derivatives of Vandermonde
columns. These matrix blocks serve as the concrete coordinate representation for
the abstract $L$-forms developed in the subsequent exterior algebra framework.

The $M \times M$ {\em Vandermonde determinant} identity is 
\[
\det \begin{bmatrix}
1 & 1 &  & 1 \\
x_1 & x_2 &  & x_M \\
x_1^2 & x_2^2 & \cdots & x_M^2 \\
& \vdots & \ddots & \vdots \\
x_1^{M-1} & x_2^{M-1} & \cdots & x_M^{M-1}
\end{bmatrix} = \prod_{1 \leq i < j \leq M} (x_j - x_i).
\]
Up to absolute values, this is the interaction term in the Gibbs measure for the
$L=1$ (classical orthogonal) ensembles. Note the identification between particle
locations and column vectors of the form $[1, x, x^2, \ldots, x^{M-1}]^T$. In
this representation, a charge-1 particle located at $x$ is encoded by a single
Vandermonde column.

In order to encode the locations of charge $L$ particles we need a version of
the {\em confluent} Vandermonde determinant. In this situation we start with a
standard Vandermonde column $v(x) = [1, x, x^2, \ldots, x^{LM - 1}]^T \in
\R^{LM}$ and define the renormalized differentiation operators $D^{\ell}  =
\frac{1}{\ell!} \frac{d^{\ell}}{dx^{\ell}}.$ We define the differentiated
Vandermonde columns $D^{\ell} v(x)$ where $D^{\ell}$ is applied to each entry in
$v(x).$ The $LM \times L$ matrix, 
\[
\mathbf V_L(x) = [ v(x) \; D^1 v(x) \; \cdots \; D^{L-1} v(x)]
\]
is a confluent Vandermonde {\em block}. In the confluent Vandermonde
representation, a charge-$L$ particle located at $x$ is represented by the
$L$-dimensional column block $\mathbf V_L(x)$. We use these to construct the
confluent $LM \times LM$ Vandermonde matrix
\[
\mathbf V( \mathbf x ) = \left[ \mathbf V_L(x_1) \; \mathbf V_L(x_2) \; \cdots \; \mathbf V_L(x_M) \right].
\]
The confluent Vandermonde determinant identity is then
\[
\det \mathbf V(\mathbf x) = \prod_{i < j}^M (x_j - x_i)^{L^2}.
\]
Up to absolute values, this is the interaction term in the Gibbs measure of the
charge $L$ ensemble, and in this situation we may identify a charge-$L$ particle
at $x$ with the confluent Vandermonde block $\mathbf V_L(x).$ 

The {\em Pl\"ucker coordinates} for an $ML \times L$ block matrix consist of all
determinants of $L \times L$ minors of the block matrix. If we index $L$ rows of
$\mathbf V_L(x)$ by $0 \leq r_1 < r_2 < \cdots < r_L < LM$ then the associated
Pl\"ucker coordinate is
\[
\det \begin{bmatrix}
x^{r_1} & D^1 x^{r_1} &        & D^{L-1} x^{r_1} \\
x^{r_2} & D^1 x^{r_2} & \cdots & D^{L-1} x^{r_2} \\
        & \vdots      & \ddots & \vdots \\
x^{r_L} & D^1 x^{r_L} & \cdots & D^{L-1} x^{r_L} \\
\end{bmatrix} = \prod_{1 \leq i < j \leq L} (r_j - r_i) \cdot x^{r_1 + \cdots + r_L - {L \choose 2}}.
\]
This is the renormalized {\em Wronskian} of the monomials $x^{r_1}, x^{r_2},
\ldots, x^{r_L}.$ The exponent $r_1 + \cdots + r_L - {L \choose 2}$ will play an
important role later; after recentering indices, it provides the momentum
grading that underlies the momentum algebra introduced in Section~\ref{sec:momentum-algebra}.

The confluent Vandermonde determinant therefore provides an exact algebraic
encoding of the interaction term, organizing the fermionic constituents into $L
\times L$ matrix blocks. This suggests a direct translation of the entire
statistical ensemble into the geometry of an exterior algebra.

\section{The Exterior Algebra}
\label{sec:exterior-algebra}

By interpreting the confluent Vandermonde blocks as $L$-forms, we can 
completely reformulate the charge-$L$ ensemble within an exterior algebra. 
In this space, an $M$-particle configuration is encoded simply as the wedge 
product of $M$ such forms. The physical act of integrating over particle 
coordinates is replaced by an algebraic evaluation in the top degree (the 
determinantal line). Crucially, this allows the many-body integral to factorize: 
the partition function becomes the top-degree component of a wedge power of 
a single background element. This factorization is the exact algebraic mechanism 
that produces the hyperpfaffian structure.

Let $I$ be $LM$ indices centered at 0,
\[
I = \left\{ -\frac{LM-1}2, -\frac{LM-3}2, \ldots, \frac{LM-3}2, \frac{LM-1}2 \right\}.
\]
Note the indices are integers when $LM$ is odd and half-integers when $LM$ is
even. The centered fermionic index $r \in I$ corresponds to the original
monomial degree 
\[
p = r + c_M, \qq{where} c_M = \frac{(LM-1)}2
\] 
shifts indices so they are symmetric about 0. Centering the indices is not
essential but simplifies later formulas by making momentum indices symmetric
about zero.

The exterior algebra will serve as the algebraic space in which particle
configurations and their interactions are encoded. Let $V$ be a real vector
space of dimension $LM$ with basis $\{\mathbf e_i : i \in I\}.$ The exterior
algebra over $V$ is denoted 
\[
\Lambda V = \bigoplus_{m=0}^{LM} \Lambda^m V,
\] 
where $\Lambda^1 V = V$ and $\Lambda^m V$ is the linear space of $m$-{\em forms} with induced basis
\[
\bigg\{ \mathbf e_J = \mathbf e_{j_1} \wedge \mathbf e_{j_2} \wedge \cdots \wedge \mathbf e_{j_m} : J = \{j_1 < j_2 < \ldots < j_m\} \in {I \choose m} \bigg\},
\]
and ${I \choose m}$ is the collection of subsets of $I$ of cardinality $m$.
Elements of the form $u_1 \wedge u_2 \wedge \cdots \wedge u_m$ for vectors $u_1, \ldots, u_m \in V$ are called $m$-{\em blades}. 

The wedge product is alternating. That is, given an $m$-form $\alpha$ and an
$n$-form $\eta$, we have
\[
\alpha \wedge \eta = (-1)^{nm} \eta \wedge \alpha.
\]

The top grading $\Lambda^{LM} V$ is one-dimensional, called the {\em
determinantal line}, with distinguished basis element $\mathbf e_I$ which we
call the {\em volume form}. Given $\alpha \in \Lambda V$ we write $\star \alpha$
for the coefficient of the volume form, that is $\star : \Lambda V \rightarrow
\R$ is the distinguished linear functional supported on the determinantal line
with $\star \mathbf e_I = 1.$ Wedge powers of $L$-forms will encode
configurations of $M$ charge-$L$ particles, and {\em evaluation} via $\star$ on the
determinantal line will produce partition functions and correlation functions.

When $L$ is even (a situation we will soon restrict ourselves to) and $\alpha \in \Lambda^L V$ then 
\[
\alpha^{\wedge M} = \underbrace{\alpha \wedge \alpha \wedge \cdots \wedge \alpha}_M \in \Lambda^{LM} V,
\]
and we define the {\em hyperpfaffian} of $\alpha$ to be
\[
\PF(\alpha) := \star \frac{\alpha^{\wedge M}}{M!} \in \R.
\]
This algebraic definition generalizes the Pfaffian to higher-degree forms
\cite{Barvinok1995,Ishikawa1995}. When $L=2$, the hyperpfaffian generalizes
the notion of the Pfaffian of an antisymmetric matrix. If $\mathbf A =
[a_{j,k}]_{j,k \in I}$ is an antisymmetric matrix, then we may associate the
2-form 
\[
\alpha = \sum_{j < k} a_{j,k} \mathbf e_j \wedge \mathbf e_k, 
\]
and the classical {\em Pfaffian} of $\mathbf A$ is the hyperpfaffian of $\alpha.$ 

\subsection{Representing Particles in the Exterior Algebra}
\label{subsec:exterior-algebra.particles}

$\Lambda V$ is the algebraic space which encodes configurations of particles in
an ensemble with total charge $LM$. Our ultimate goal is to study ensembles
consisting of $M$ charge-$L$ particles, but the fermionic starting point is a
system of $LM$ charge-$1$ particles. A fermion located at $y$ is identified with the 1-form
\[
v(y) = \sum_{r \in I} y^{r + c_M} \mathbf e_r,
\]
that is, the algebraic avatar of a charge 1 particle at $y$ is the Vandermonde
column $v(y)$. A collection of $L$ fermions located at $y_1, y_2, \ldots, y_L$
is then encoded in the $L$-form
\[
v(y_1) \wedge v(y_2) \wedge \cdots \wedge v(y_L),
\]
and we seek to bind these $L$ fermionic constituents into a single charge $L$
particle located at $x$ by taking an appropriate limit as $y_1, y_2, \ldots, y_L
\rightarrow x.$ Doing this naively produces 0, but if we divide by the
interaction term between the $L$ fermions and then take the limit we get exactly
the $L$-form associated to the confluent Vandermonde block $\mathbf V_L(x).$
That is,
\[
\lim_{y_1, \ldots, y_L \rightarrow x} \frac{v(y_1) \wedge v(y_2) \wedge \cdots \wedge v(y_L)}{\prod_{1 \leq j<k \leq L}(y_k - y_j)} = v(x) \wedge D^1 v(x) \wedge \cdots \wedge D^{L-1} v(x)
\]
This procedure corresponds to forcing $L$ fermions to occupy the same location,
and the normalization by the interaction term removes the vanishing caused by
the antisymmetry of the wedge product. This process is called {\em
confluentization} and it provides an algebraic avatar for a charge $L$-particle at $x$. 
We therefore define
\begin{equation}
\label{eq:omega}
\omega(x) := v(x) \wedge D^1 v(x) \wedge \cdots \wedge D^{L-1} v(x)
\end{equation}
Because $\omega(x)$ is the wedge of $L$ 1-forms, it is an $L$-blade. The
exterior algebra naturally provides a fermionic description of the system, and
charge-$L$ particles arise by binding together $L$ fermionic constituents
through the confluent limit into an $L$-blade.

In the exterior algebra, the Vandermonde determinant identity is
\[
\star \big( v(x_1) \wedge v(x_2) \wedge \cdots \wedge v(x_{LM}) \big) = \prod_{1 \leq i < k \leq LM} (x_k - x_i),
\]
and the confluent Vandermonde determinant is
\[
\star \big( \omega(x_1) \wedge \omega(x_2) \wedge \cdots \wedge \omega(x_M) \big) = \prod_{1 \leq i < k \leq M} (x_k - x_i)^{L^2}. 
\]

The interaction terms in the Gibbs measure therefore translate directly into
evaluations of wedge products in the exterior algebra. Because the confluent
Vandermonde blocks correspond exactly to $L$-forms, the physical observables of
the ensemble---such as partition functions and correlation functions---can now
be recast as purely algebraic operations.

\subsection{Hyperpfaffian Partition Functions}
\label{subsec:exterior-algebra.partition-function}

With the interaction term encoded algebraically, the partition function takes
the form of an integrated wedge product of $M$ distinct $L$-forms. The crucial
algebraic mechanism here is factorization. Because the integrands wedge-commute,
the many-body integral passes through the wedge product to act on a single
$L$-form. This reduces the partition function to the top-degree component of a
wedge power of this single integrated form, leading directly to a hyperpfaffian
expression.

The partition function in the exterior algebra setting is given by
\[
Z = \frac{1}{M!} \int_{\R^M} \left| \star ( \omega(x_1) \wedge \omega(x_2) \wedge \cdots \wedge \omega(x_M) ) \right| d\mu^M(\mathbf x).
\]
When $L$ is even the absolute values are redundant, and when $L$ is odd we can introduce an alternating term which compensates for the signs inside the absolute values:
\begin{align*}
Z &= \frac{1}{M!} \int_{\R^M} \star ( \omega(x_1) \wedge \omega(x_2) \wedge \cdots \wedge \omega(x_M) ) d\mu^M(\mathbf x) \qquad & \mbox{$L$ even;} \\
Z &= \frac{1}{M!} \int_{\R^M} \star ( \omega(x_1) \wedge \omega(x_2) \wedge \cdots \wedge \omega(x_M) ) \prod_{1 \leq j < k \leq M} \sgn(x_k - x_j) \cdot d\mu^M(\mathbf x) \qquad & \mbox{$L$ odd.} 
\end{align*}
When $L$ is even, the wedge product of the $L$-forms is symmetric, and the
absolute values in the partition function may be removed. When $L$ is odd,
additional sign factors appear and must be compensated by inserting an
alternating term. Since this introduces additional algebraic complications that
are not central to the present work, we restrict attention to even $L$ from this
point forward.

When $L$ is even,
\[
Z = \star \frac{1}{M!}  \gamma^{\wedge M} = \PF\left(\gamma \right) \qq{where} \gamma = \int_{\R} \omega(x) \, d\mu(x).
\]
Here, by integrating the $L$-form $\omega(x)$ we mean: produce a new $L$-form,
the {\em Gram form} $\gamma$ whose Pl\"ucker coordinates are the integrals of
the Pl\"ucker coordinates of $\omega(x).$ In this sense, the partition function
of the charge-$L$ ensemble is the hyperpfaffian of a Gram form, just as the
partition function of classical ensembles is the determinant or Pfaffian of a
Gram matrix formed from the moments (or skew-moments) of the weight of the ensemble.

The identity expressing the partition function as the top-degree component of a
wedge power is sometimes referred to as Chen’s lemma on iterated integrals
\cite{LuqueThibon2002}. In the present setting, where the integrands
wedge-commute, this identity reduces to an application of Fubini’s theorem
together with antisymmetry of the wedge product, allowing the many-body integral
to factor as a wedge power of a single integrated $L$-form \cite{Sinclair2012}.

In our centered indices, the Pl\"ucker coordinates of $\omega(x)$ are given by
\[
\mathrm{Wr}_J(x) = \prod_{1 \leq i < k \leq L} (j_k - j_i) x^{j_1 + \cdots + j_L + c_M }, \qquad J = \{j_1, j_2, \ldots, j_L\} \in {I \choose L}.
\]
Then, the Pl\"ucker coordinate of $\gamma$ with index $J$ is given by
\begin{align*}
\mathrm{Gr}_{J} &= \int_{\R} \mathrm{Wr}_J(x) \, d\mu(x) \\
&= \prod_{1 \leq i < k \leq L} (j_k - j_i) \cdot \int_{\R} x^{j_1 + \cdots + j_L + c_M} \, d\mu(x).
\end{align*}
If we designate the $k$th moment of $\mu$ to be $m_k,$ and the $k$th shifted moment by $\widehat m_k = m_{k+c_M}$ then we can explicitly write
\[
\gamma = \sum_{J \in {I \choose L}} \bigg\{ \prod_{1 \leq i < k \leq L} (j_k - j_i) \cdot \widehat m_{j_1 + \cdots + j_L} \bigg\} \mathbf e_J.
\]

We conclude that the partition function can be computed algebraically by forming
the Gram form and computing its hyperpfaffian. Remarkably, this
single algebraic object encodes not just the global partition function, but all
local particle statistics as well.

\subsection{Correlation Functions}
\label{subsec:exterior-algebra.correlation-functions}

In the exterior algebra framework, evaluating a correlation function corresponds
simply to inserting fixed particles into a fixed background determined by the
weight of the ensemble. By applying the same factorization identity used for the
partition function, we can express these local densities by wedging the
$L$-forms associated with the fixed particle locations directly into the
remaining wedge power of the Gram form.

\begin{lemma}\label{lemma:correlations}
For $1 \leq m \leq M,$ 
\[
R_m(x_1, \ldots, x_m) = \frac{1}{Z} \bigg\{ \prod_{i=1}^m w(x_i) \bigg\} \star\left(\omega(x_1) \wedge \cdots \wedge \omega(x_m) \wedge \frac{\gamma^{\wedge(M-m)}}{(M-m)!}\right).
\] 
In particular, 
\begin{equation}
\label{eq:R1}
R_1(x) = \frac{w(x)}{Z} \star \left( \omega(x) \wedge \frac{\gamma^{\wedge(M-1)}}{(M-1)!} \right).
\end{equation}
\end{lemma}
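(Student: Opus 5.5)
We start from the definition of $R_m$ in Section~\ref{subsec:ensembles.correlation-functions} and rewrite its integrand using the confluent Vandermonde identity. Set $\mathbf y = (y_1,\ldots,y_{M-m})$ and consider the $M$ points $(x_1,\ldots,x_m,y_1,\ldots,y_{M-m})$.

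\textbf{Step 1: the integrand as a single evaluation.} The product of all interaction factors among these $M$ points is
\[
\prod_{i<j\le m}|x_j-x_i|^{L^2}\;\prod_{a<b}|y_b-y_a|^{L^2}\;\prod_{a,i}|x_i-y_a|^{L^2}.
\]
This is exactly $\bigl|\star\bigl(\omega(x_1)\wedge\cdots\wedge\omega(x_m)\wedge\omega(y_1)\wedge\cdots\wedge\omega(y_{M-m})\bigr)\bigr|$. When $L$ is even, every exponent $L^2$ is even, so each factor $(x_k-x_i)^{L^2}$ is already nonnegative and the absolute value can be dropped. Hence
\[
R_m = \frac{\prod_i w(x_i)}{(M-m)!\,Z}\int_{\R^{M-m}} \star\bigl(\omega(x_1)\wedge\cdots\wedge\omega(x_m)\wedge\omega(y_1)\wedge\cdots\wedge\omega(y_{M-m})\bigr)\,d\mu^{M-m}(\mathbf y).
\]

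\textbf{Step 2: pass the integral inside.} The map
\[
(\eta_1,\ldots,\eta_{M-m})\mapsto \star\bigl(\omega(x_1)\wedge\cdots\wedge\omega(x_m)\wedge\eta_1\wedge\cdots\wedge\eta_{M-m}\bigr)
\]
is multilinear in the $L$-forms $\eta_a$. Expanding each $\omega(y_a)=\sum_J \mathrm{Wr}_J(y_a)\,\mathbf e_J$ gives a finite sum of products $\prod_a \mathrm{Wr}_{J_a}(y_a)$ times constants. Each such product is a product of monomials in separate variables, so Fubini factors the integral as $\prod_a \mathrm{Gr}_{J_a}$. Reassembling the sum replaces each $\omega(y_a)$ by $\gamma$ and gives
\[
\star\bigl(\omega(x_1)\wedge\cdots\wedge\omega(x_m)\wedge\gamma^{\wedge(M-m)}\bigr).
\]
Dividing by $(M-m)!$ yields the stated formula. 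Setting $m=1$ gives \eqref{eq:R1}.

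\textbf{Main obstacle: justifying Fubini.} The interchange requires every moment $m_k$ with $k\le L^2(M-1)$ (the largest monomial degree occurring in $\mathrm{Wr}_J$) to be finite. This is the same standing assumption already needed for $\gamma$, and hence $Z=\PF(\gamma)$, to be defined, so we state it explicitly. Under it, each term of the finite expansion is absolutely integrable, and the interchange is legitimate term by term.

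Apart from this, the only care needed is the sign check in Step 1: it relies on $L$ even, which we are already assuming. No reordering of the wedge factors is needed, because the $x$-blades stay in front throughout. Even if reordering were needed, the $L$-forms commute since $L$ is even.
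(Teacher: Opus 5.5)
Your proposal is correct and takes essentially the same route as the paper. The paper invokes the wedge-power (Fubini plus multilinearity) identity, in which the fixed $\omega(x_i)$ stay as wedge factors and integrating out the remaining $M-m$ positions produces $\gamma^{\wedge(M-m)}$; you spell out this argument, including the even-$L$ removal of absolute values and the finite-moment condition (up to degree $L^2(M-1)$) for Fubini, which the paper leaves implicit and defers to Sinclair--Wells.
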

\begin{proof}
This follows from the same wedge-power identity used for the partition function: integrating over the remaining $M-m$ particle positions produces the $(M-m)$th wedge power of the Gram form, while the fixed particle positions contribute the wedge factors $\omega(x_i).$ See \cite{SinclairWells} for details.
\end{proof}

This formula demonstrates that local correlation functions are obtained by
inserting the forms corresponding to fixed particles into the 
background determined by the Gram form. Consequently, the Gram form encodes the
entire statistical structure of the ensemble. While this insertion-background
decomposition is conceptually elegant, direct computation remains formidable:
these objects live in the exterior algebra $\Lambda^L V$, a space whose
dimension $\binom{LM}{L}$ grows combinatorially with $M$. This computational
bottleneck motivates the search for a structural symmetry that can reduce the
effective dimension of the problem.

\section{The Momentum Algebra}
\label{sec:momentum-algebra}

The combinatorial explosion of the full exterior algebra can be bypassed by
exploiting the internal structure of the confluent Vandermonde determinant.
Because each charge-$L$ particle is formed by binding $L$ fermionic
constituents, the resulting $L$-forms appear only in highly restricted linear
combinations indexed by their total momentum. This physical constraint carves
out a finite-dimensional, bigraded commutative subalgebra generated strictly by
these momentum modes. We refer to this space as the {\em momentum algebra}.
Restricting our focus to this algebra provides a massive dimensional reduction
in which partition functions, correlation functions, and transport identities
can be expressed and computed more efficiently.

The fact that even $L$-forms commute means that charge $L$-particles satisfy bosonic statistics. 
For integer $p$ define the $p$th {\em momentum mode}
\[
\epsilon_p = \sum_{J \in {I \choose L} \atop j_1 + \cdots + j_L = p} \prod_{1 \leq i < k \leq L} (j_k - j_i) \cdot \mathbf e_J \quad \in \Lambda^L V.
\]
The only non-zero momentum modes satisfy $|p| \leq c_M.$ We define $\mathcal S =
\mathrm{span}_{\R} \{ \epsilon_p \}$ to be the {\em momentum spine,} and in this
basis,  
\begin{equation}
\label{eq:omega-generating}
\omega(x) = \sum_{p} x^{c_M + p} \epsilon_p
\end{equation}
and
\begin{equation}
\label{eq:gamma}
\gamma = \sum_{p} \widehat m_p \epsilon_p.
\end{equation}
Note that the power $c_M$ ensures that only
non-negative powers of $x$ appear in $\omega(x).$ We will often abbreviate sums
over momentum modes using $\sum_p \cdots,$ with the implicit understanding if
$|p| > c_M$ we take $\epsilon_p = 0.$ 

Let us explain the terminology. We view $J = \{j_1 < j_2 < \ldots < j_L\} \in I$ as
quantum numbers representing {\em momentum} for the $L$ bound fermions. The sum $j_1
+ j_2 + \cdots +  j_L$ is the {\em momentum mode} of a charge $L$ particle
formed from the bound fermions. We may think of the $L \times L$ Vandermonde determinant 
\[
\prod_{1 \leq i < k \leq L} (j_k - j_i)
\]
as the {\em statistical weight} contributed by $L$ fermions in this particular
momentum state, and $\epsilon_p$ as the superposition of all fermionic states
producing a charge $L$ particles with total momentum $p.$ In this interpretation
we can think of $\gamma$ as a delocalized particle, averaged over $\mu.$ The
shifted moments $\widehat m_p$ then represent the {\em amplitude} of the
momentum modes $p$ of $\gamma.$ Amplitude here is not amplitude in the sense of
classical quantum mechanics, but it serves the same role with modulus (or norm)
replaced by a hyperpfaffian calculation. 

The momentum spine generates the graded commutative subalgebra of $\Lambda V,$ 
\[
\mathcal A := \bigoplus_{m=0}^M \mathcal A^m,
\]
where
\[
\mathcal A^0 = \R, \quad \mathcal A^1 = \mathcal S \qq{and} \mathcal A^m = \mathrm{span}_{\R}\{ \epsilon_{p_1} \wedge \cdots \wedge \epsilon_{p_m} :  p_1, \ldots, p_m \}.
\]
Note that $\mathcal A^M = \Lambda^{LM} V.$ In other words, the top grading of the momentum
algebra coincides with the determinantal line of the full exterior algebra.
Consequently, the linear functional $\star$ used to evaluate partition functions
and correlation functions may be computed entirely within the momentum algebra.

The $\epsilon_p$ commute so in general $\epsilon_p \wedge \epsilon_p \neq 0.$
Indeed, in general $\epsilon_p \wedge \epsilon_p$ will produce a non-trivial
element of $\mathcal A^2$ with momentum $2p.$ This is a distinction from the
fermionic situation: two fermions cannot be in the same momentum state, but two
charge $L$ particles can. In spite of this distinction, the momentum algebra
plays a similar role to that of $\Lambda V$ in the fermionic case. 

We call generic elements of $\mathcal A$ {\em backgrounds.} Elements of $\mathcal A^m$ are $m$-particle backgrounds, and elements of $\mathcal A^M$ are {\em saturated} backgrounds. An $m$-particle background of the form $\eta^{\wedge m}/m!,$ for $\eta \in \mathcal S,$ is called {\em primary}. The background $\gamma^{\wedge M}/M!$ is the (primary, saturated) {\em Gram background} of the ensemble.

Momentum gives a second grading on $\mathcal A,$
\[
\mathcal A = \bigoplus_{m=0}^M \bigoplus_p \mathcal A^m_p
\]
where $\mathcal A^m_p$ consists of $m$ particle backgrounds with (total) momentum $p,$ 
\[
\mathcal A^m_p = \mathrm{span}_{\R}\{ \epsilon_{j_1} \wedge \cdots \wedge \epsilon_{j_m} : j_1 + \cdots + j_m = p \}.
\]
We define 
\[
\pi_p : \mathcal A \rightarrow \mathcal A_p := \bigoplus_{m=0}^M \mathcal A^{m}_p, \qquad \pi^m : \mathcal A \rightarrow \mathcal A^m \qq{and} \pi_p^m : \mathcal A \rightarrow \mathcal A_p^m.
\]
to be respectively the projection of a background onto $\mathcal A_p,$ $\mathcal A^m$ and $\mathcal A_p^m.$ 

The non-empty momentum modes in $\mathcal A^m$ depend on $m$. In particular, the
highest grading has only one possible value for momentum: zero. That is,
$\mathcal A^M = \mathcal A^M_0.$ This follows because an $M$ particle background
is the superposition of states of $LM$ fermions bound into groups of $L.$
Whether fermions are bound or not, they must have distinct quantum numbers.
These quantum numbers are symmetric about the origin, and so the total momentum
of any saturated background is zero. 

This gives us a selection mechanism. If $\alpha$ is an $m$-particle background
with momentum $p$ and $\eta$ is an $(M-m)$-particle background with momentum $q$,
then $\alpha \wedge \eta$ is a saturated background only if $p + q = 0.$
Otherwise it will be zero. That is, $\star( \alpha \wedge \eta) \neq 0$ only if
$\alpha$ and $\eta$ have complementary momentum {\em and} particle numbers. This
provides a strict conservation law: only backgrounds whose particle number and momentum
complement each other to produce a saturated background contribute to
evaluations on the determinantal line.

\subsection{Structure Coefficients}
\label{subsec:momentum-algebra.structure-coefficients}

Because evaluations on the determinantal line vanish unless momentum is conserved, 
we can compute the partition function
\[
Z = \star \frac{1}{M!} \gamma^{\wedge M}
\]
entirely within the momentum algebra $\mathcal A.$ This provides a massive 
computational reduction: while the Gram form $\gamma \in \Lambda^L V$ requires 
${LM \choose L}$ Pl\"ucker coefficients, its representation in $\mathcal A$ is 
completely determined by only $2c_M + 1 = L^2(M-1) + 1$ momentum coefficients 
\cite{SinclairWells}.

To define the universal structure coefficients, let $K = \{p : |p| \leq c_M \}$ denote the allowed momentum modes of a single particle. Then $K^M$ represents all possible momentum configurations (ordered by particle index). Define
\[
K^M_0 := \{ P = (p_1, \ldots, p_M) \in K^M : p_1 + \cdots + p_M = 0 \}, \qq{and} \epsilon_P = \epsilon_{p_1} \wedge \epsilon_{p_2} \wedge \cdots \wedge \epsilon_{p_M}.
\]
\begin{lemma}
Define the structure coefficients,
\[
C_{P} := \star \epsilon_P, \quad P \in K^M_0.
\]
Then the partition function reduces to
\[
Z = \frac{1}{M!} \sum_{P \in K^M_0} C_{P} \prod_{i=1}^M \widehat m_{p_i}.
\]
That is, the partition function is a homogenerous polynomial in the moments of
$\mu$ with universal coefficients given by the $C_P.$ 
\end{lemma}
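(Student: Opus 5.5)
The plan is to start from the hyperpfaffian formula $Z = \star \frac{1}{M!}\gamma^{\wedge M}$ established in Section~\ref{subsec:exterior-algebra.partition-function}, and substitute the momentum expansion \eqref{eq:gamma}, $\gamma = \sum_{p\in K} \widehat m_p \epsilon_p$.

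\textbf{Expanding the wedge power.} Since each $\epsilon_p$ is an $L$-form with $L$ even, the $\epsilon_p$ commute. The wedge power $\gamma^{\wedge M}$ therefore expands by multilinearity, with no sign issues, as a sum over ordered tuples:
\[
\gamma^{\wedge M} = \sum_{P=(p_1,\ldots,p_M)\in K^M} \Big(\prod_{i=1}^M \widehat m_{p_i}\Big)\, \epsilon_{p_1}\wedge\cdots\wedge\epsilon_{p_M}.
\]
Each term $\epsilon_P$ is a saturated background, i.e.\ an element of $\mathcal A^M$. Applying the linear functional $\star$ term by term gives
\[
Z = \frac{1}{M!}\sum_{P\in K^M} \star(\epsilon_P)\prod_i \widehat m_{p_i}.
\]

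\textbf{Restricting to zero total momentum.} Next I would use the momentum selection rule: $\epsilon_P$ lies in $\mathcal A^M_{p_1+\cdots+p_M}$, while $\mathcal A^M = \mathcal A^M_0$. Hence $\epsilon_P = 0$ unless $\sum p_i = 0$.

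To make this airtight, I would verify it directly at the level of basis blades. Expanding each $\epsilon_{p_i}$ in the $\mathbf e_J$, a nonzero term of $\epsilon_P$ is a wedge of $\mathbf e_{J_1},\ldots,\mathbf e_{J_M}$ with disjoint $J_i$ whose union is all of $I$. The total index sum is then $\sum_{r\in I} r = 0$ by the symmetry of $I$. But this total equals $\sum_i p_i$, since each $J_i$ has index sum $p_i$. So only $P\in K^M_0$ survive, and the sum restricts to $K^M_0$ with coefficients $C_P = \star\epsilon_P$.

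\textbf{Homogeneity and universality.} The result is a sum of degree-$M$ monomials in the shifted moments, so it is homogeneous of degree $M$. The coefficients $C_P$ depend only on $L$ and $M$ through the definition of $\epsilon_p$, not on $\mu$, which gives universality.

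The only step needing any care is the momentum-conservation argument. It is the combinatorial observation that a nonvanishing top-degree wedge uses every index of $I$ exactly once, and those indices sum to zero.
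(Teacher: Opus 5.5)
Your proposal is correct and follows the paper's proof: you expand $\gamma^{\wedge M}$ multilinearly over $K^M$, apply $\star$ term by term, and discard tuples with nonzero total momentum. Your basis-level check, that a nonvanishing top-degree wedge must use each index of $I$ exactly once (so $\sum_i p_i = \sum_{r\in I} r = 0$), spells out the momentum-conservation step that the paper simply invokes.
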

\begin{proof}
\begin{align*}
Z &= \star \frac{\gamma^{\wedge M}}{M!} = \star \frac{1}{M!}\sum_{(p_1, \ldots, p_M) \in K^M} \bigg\{ \prod_{i=1}^M \widehat m_{p_i} \bigg\} \epsilon_{p_1} \wedge \epsilon_{p_2} \wedge \cdots \wedge \epsilon_{p_M} \\
&= \frac{1}{M!}\sum_{(p_1, \ldots, p_M)\in K^M} \star(\epsilon_{p_1} \wedge \epsilon_{p_2} \wedge \cdots \wedge \epsilon_{p_M}) \bigg\{\prod_{i=1}^M \widehat m_{p_i} \bigg\}. 
\end{align*}
Conservation of momentum implies that $\star(\epsilon_{p_1} \wedge \epsilon_{p_2} \wedge \cdots \wedge \epsilon_{p_M}) = 0$ unless $(p_1, \ldots, p_M) \in K^M_0,$ in which case $\star(\epsilon_{p_1} \wedge \epsilon_{p_2} \wedge \cdots \wedge \epsilon_{p_M}) = C_{p_1, \ldots, p_M}$ and the lemma follows. 
\end{proof}

The structure coefficients are universal; external potential specific
information is only encoded by the product of shifted moments. While the
structure coefficients still require exterior algebra calculations, their
universality means they can be computed once and reused for all ensembles with
the same $L$ and $M$. 

The structure coefficients are invariant under permutation of the constituent
momentum modes and if desired could be uniquified by putting these momentum
modes into weakly increasing order. While these universal coefficients greatly
simplify computations, they are not algebraically independent. They are bound by
deep geometric constraints that will ultimately dictate how momentum is
transported through the ensemble.

\subsection{Momentum Pl\"ucker Relations}
\label{subsec:momentum-algebra.momentum-plucker}

The fundamental constraint on the momentum algebra arises from the simple fact
that the algebraic avatar of a particle, $\omega(z)$, is an $L$-blade. By the
antisymmetry of the exterior algebra, its wedge product with itself must
identically vanish:
\begin{equation}
\label{eq:omega-squared}
\omega(z) \wedge \omega(z) = 0.
\end{equation}
Expanding this single exterior algebra identity in the momentum basis generates
a strict family of quadratic dependencies among the momentum modes. We refer to
these generating constraints as the {\em momentum Pl\"ucker relations}.

In momentum coordinates,
\[
\omega(z) \wedge \omega(z) = \sum_{p} \sum_{q} z^{2c_M + p+q} \epsilon_p \wedge \epsilon_q = z^{2 c_M} \sum_{n} z^n \sum_{p + q = n} \epsilon_p \wedge \epsilon_q = 0.
\]
We define the quadratic elements
\[
r_n = \sum_{p+q=n} \epsilon_p \wedge \epsilon_q.
\]
Then the {\em momentum Pl\"ucker relations} on the momentum algebra: $\{r_n = 0 \}.$ Higher momentum Pl\"ucker relations are produced by likewise expanding $\omega(z)^{\wedge j} = 0$ with $j \geq 2,$ 
\[
\sum_{p_1 + \cdots + p_j = n} \epsilon_{p_1} \wedge \epsilon_{p_2} \wedge \cdots \wedge \epsilon_{p_j} = 0.
\]

\begin{lemma}\label{lemma:Plucker}
Let $T = [T_{i-j}]_{i,j}$ be a Toeplitz matrix acting on $\mathcal S_{M+1}$ then,
\[
r_{n,T} := \sum_{p+q = n} \epsilon_p \wedge T \epsilon_q = 0.
\]
\end{lemma}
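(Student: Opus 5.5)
The plan is to reduce $r_{n,T}$ to a linear combination of the basic relations $r_m$, using only \eqref{eq:omega-squared} and the Toeplitz structure of $T$. First I fix how a Toeplitz matrix acts on the spine. In the basis $\{\epsilon_q\}$, with indices running over momentum modes, $T = [T_{i-j}]_{i,j}$ acts by
\[
T\epsilon_q = \sum_{j} T_{j-q}\, \epsilon_j .
\]
I extend this to a sum over all integers $j$ using the paper's convention that $\epsilon_j = 0$ when $|j|$ exceeds the allowed range. The extra terms vanish, so this changes nothing.

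Next I establish the basic relations. Expanding \eqref{eq:omega-squared} via \eqref{eq:omega-generating} gives the polynomial identity
\[
z^{2c_M}\sum_m z^m r_m = 0 \quad\text{in } z .
\]
Comparing coefficients shows $r_m = 0$ for every integer $m$. For $m$ outside the range $[-2c_M, 2c_M]$ the sum $r_m$ is empty and vanishes trivially.

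Then I substitute and reindex. By bilinearity of the wedge product,
\[
r_{n,T} = \sum_{p+q=n}\sum_j T_{j-q}\, \epsilon_p\wedge\epsilon_j .
\]
I set $k = j-q$, so that $j = q+k$ and $p + j = n+k$. Since $(q,j)\mapsto(q,k)$ is a bijection on pairs of integers and all sums are finite, the double sum becomes
\[
\sum_k T_k \sum_{p+j=n+k}\epsilon_p\wedge\epsilon_j = \sum_k T_k\, r_{n+k} .
\]
Each $r_{n+k}$ vanishes by the previous step, so $r_{n,T} = 0$.

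The main obstacle is bookkeeping at the boundary rather than any substantive computation. The phrase ``acting on $\mathcal S_{M+1}$'' suggests that $T$ may be a finite Toeplitz matrix on a spine indexed by a possibly larger set of modes, for example the $(M+1)$-particle spine. In that case the reindexing must be justified with the sums truncated. I would handle this by embedding everything in sums over $\mathbb Z$. Any term where $j$ or $q$ lies outside the support of the relevant spine carries a factor $\epsilon = 0$, and any Toeplitz entry $T_k$ outside the matrix's band can be set to zero. After these conventions, the identity $\sum_{p+q=n}\epsilon_p\wedge T\epsilon_q = \sum_k T_k r_{n+k}$ holds exactly, with no boundary corrections.

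If the spine on which $T$ acts differs from the one in which $\omega$ lives, I would apply the same argument with $\omega$ built for that spine. The only input needed is that $\omega(z)$ is an $L$-blade there, so that \eqref{eq:omega-squared} still holds.
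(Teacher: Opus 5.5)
Your core computation is the same as the paper's: expand $T\epsilon_q$ along the diagonals of $T$, shift the index by $k$, and recognize $\sum_k T_k\, r_{n+k}$. The paper uses the transposed convention $T_{q-j}$ and ends with $r_{n-k}$, which is immaterial since every $r_m$ vanishes. The gap is in your boundary paragraph, where you assert that for a finite Toeplitz matrix on $\mathcal S_{M+1}$ the reindexing holds ``with no boundary corrections.'' Your justification works for $j$ but not for $q$. Once $T\epsilon_q$ is expanded, the $(q,j)$ term is $T_{j-q}\,\epsilon_{n-q}\wedge\epsilon_j$, which contains no factor $\epsilon_q$. If $T$ is a linear map on $\mathcal S_{M+1}$, then $T\epsilon_q=T(0)=0$ for out-of-range $q$, so $r_{n,T}$ contains only in-range $q$. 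But $r_{n+k}$ needs every in-range $j=q+k$, including those with $q=j-k$ out of range. Those missing terms $\epsilon_{n-q}\wedge\epsilon_{q+k}$ need not vanish; your appeal to a bijection of pairs of integers tacitly lets $q$ run over all of $\mathbb Z$.

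No bookkeeping can repair this, because the finite-matrix statement is false. Let $c$ be the largest momentum with $\epsilon_c\neq 0$ in $\mathcal S_{M+1}$, and take the shift: $T\epsilon_q=\epsilon_{q+1}$ for $q<c$ and $T\epsilon_c=0$. Then
\[
r_{-1,T}=\sum_{q=-c}^{c-1}\epsilon_{-1-q}\wedge\epsilon_{q+1}=r_0-\epsilon_c\wedge\epsilon_{-c}=-\epsilon_c\wedge\epsilon_{-c}\neq 0 .
\]
This is nonzero because $\epsilon_{\pm c}$ are nonzero multiples of $\mathbf e_J$ with $J$ the top, respectively bottom, $L$ indices, and these index sets are disjoint. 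For instance, for $L=2$ and two particles one gets $c=2$ and $r_{-1,T}=-\mathbf e_{-3/2}\wedge\mathbf e_{-1/2}\wedge\mathbf e_{1/2}\wedge\mathbf e_{3/2}$. What your argument (and the paper's) actually proves is the formal identity
\[
\sum_{q\in\mathbb Z}\epsilon_{n-q}\wedge\sum_{j}T_{j-q}\,\epsilon_j=\sum_k T_k\, r_{n+k}=0
\]
for an arbitrary bi-infinite Toeplitz array $(T_k)_{k\in\mathbb Z}$. Here the outer sum deliberately includes out-of-range $q$, for which $\sum_j T_{j-q}\epsilon_j$ is \emph{not} $T$ applied to $\epsilon_q=0$. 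The paper makes exactly this assumption silently when it converts $\sum_{p+q=n}\epsilon_p\wedge\epsilon_{k+q}$ into a full relation $r_m$. You should state and prove the lemma in this formal form rather than claim it for finite Toeplitz matrices.
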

This says that the momentum Pl\"ucker relations are stable under one-sided (or two-sided for that matter) Toeplitz substitutions. This invariance under Toeplitz transformations will later allow the momentum Plücker relations to be transported to the time-deformed backgrounds, leading to the transport identities and Hirota equations.
\begin{proof}
\begin{align*}
r_{n,T} &= \sum_{p+q = n} \epsilon_p \wedge \sum_j T_{q-j} \epsilon_j \\
&= \sum_k T_k \sum_{p+q=n} \epsilon_p \wedge \epsilon_{k+q} \\
&= \sum_k T_k \sum_{p+q=n -k} \epsilon_p \wedge \epsilon_{q} \\
&= \sum_k T_k r_{n-k} = 0. \qedhere
\end{align*}
\end{proof}

Expanding this single exterior algebra identity in the momentum basis generates
a strict family of quadratic dependencies among the momentum modes, the momentum
Pl\"ucker relations. While these relations provide rigid static constraints on
the structure coefficients for a fixed number of particles, generating true
integrable hierarchies requires elevating them into dynamic identities that
transport momentum between systems of differing particle numbers.

\section{Momentum Insertion and Extraction Operators}
\label{sec:insertion-extraction}

Comparing backgrounds across different particle sectors requires algebraic
operators capable of inserting or extracting a particle from a given state. In
the exterior algebra, particle insertion is naturally implemented by wedge
multiplication. However, within the commutative momentum algebra, there is no
natural interior product or contraction operation available to perform particle
extractions. 

To circumvent this algebraic missing piece, we construct extraction operators
indirectly via adjunction. The strategy is to embed the momentum algebra for an
$M$-particle system into the larger algebra for $M+1$ particles. We then define
an extraction operator by demanding that removing a particle from a saturated
$(M+1)$-particle background reproduces the exact same determinantal evaluation
as inserting a particle into an $(M-1)$-particle background. This adjunction
procedure generates a {\em conjugate momentum spine}, providing the exact
algebraic analogue to annihilation operators in standard fermionic systems.

Because we will work with systems with variable particle number, we will write,
for instance $I_M, V_M$ and $\mathcal S_M$ to denote the indices, vector space
and momentum spine for the $M$ particle ensemble.  

We work in the momentum algebra associated to $M+1$ particles, and we suppose
that $\gamma_-$ and $\gamma_+$ are elements of $\mathcal S_{M+1}$. In this
setting, in order to ameliorate the proliferation of subscripts, we set
$\mathcal B$ to be the momentum algebra generated by $\mathcal S_{M+1}$ in
$\Lambda V_{M+1}.$ Notice that $I_M \subseteq I_{M+1}$ and we define the linear
functional $\star_M : \mathcal B \rightarrow \R$ supported on $\mathcal B_0^M$
satisfying 
\[
\star_M \mathbf e_{I_M} = 1. 
\]
This agrees with the natural functional on the determinantal line $\Lambda^M
V_M$ as embedded in $\Lambda^{M+1} V_{M+1}.$

We set 
\[
\Gamma_- = \frac{\gamma_-^{\wedge(M-1)}}{(M-1)!} \in \mathcal B^{M-1} \qq{and} \Gamma_+ = \frac{\gamma_+^{\wedge(M+1)}}{(M+1)!} \in \mathcal B^{M+1}.
\]
We can produce a background consistent with the insertion of a particle into
$\Gamma_-$ with momentum $p$ by wedging $\Gamma_-$ by $\epsilon_p.$ That is
$\epsilon_p \wedge \Gamma_- \in \mathcal B^M.$ We write $\mathcal S^{\wedge} =
\mathcal S_{M+1}$ and $\mathcal B^{\wedge} = \mathcal B$ when viewing elements
of the momentum algebra as particle insertion operators under wedge, $\alpha
\mapsto (\Xi \mapsto \alpha \wedge \Xi).$

We now construct extraction operators by requiring that inserting a particle
into an $(M-1)$-particle background produce the same evaluated
background as extracting a particle from an $(M+1)$-particle background.
That is, we look for an adjoint action $\xi_{-q}$ on $\Gamma_+$ which reproduces the background 
\[
\epsilon_q \wedge \frac{\gamma_+^{\wedge(M-1)}}{(M-1)!}.
\]
This adjunction property determines the extraction operators in terms of the
structure coefficients.

\begin{lemma}\label{adjunction}
There exists map $\xi_{-q} : \mathcal B^{M+1}_0 \rightarrow \mathcal B_0^M$ such that
\[
\star_M ( \xi_{-q} \Gamma_+ ) = \star_M \left(\epsilon_q \wedge \frac{\gamma_+^{\wedge (M-1)}}{(M-1)!} \right).
\]
\end{lemma}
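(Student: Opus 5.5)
The plan is to exploit how small the spaces in the statement are, and to reduce the lemma to a \emph{surjectivity} statement for $\star_M$ on $\mathcal B^M_0$.

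\textbf{Step 1: the source space is one-dimensional.} By the discussion in Section~\ref{sec:momentum-algebra}, $\mathcal B^{M+1}=\mathcal B^{M+1}_0$ is the determinantal line of $\Lambda V_{M+1}$, spanned by $\mathbf e_{I_{M+1}}$. Hence any linear $\xi_{-q}:\mathcal B^{M+1}_0\to\mathcal B^M_0$ is determined by one element $X_q:=\xi_{-q}(\mathbf e_{I_{M+1}})\in\mathcal B^M_0$. Since $\Gamma_+=\PF(\gamma_+)\,\mathbf e_{I_{M+1}}$, the required identity becomes the scalar equation
\[
\PF(\gamma_+)\,\star_M X_q \;=\; \star_M\Big(\epsilon_q\wedge \tfrac{\gamma_+^{\wedge(M-1)}}{(M-1)!}\Big).
\]
The right-hand side is a polynomial in the amplitudes of $\gamma_+$. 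So $X_q$ must be allowed to depend on the background $\gamma_+$; the map is constructed relative to the pair $(\gamma_+,q)$, in keeping with the adjunction philosophy described before the lemma. Where $\PF(\gamma_+)\neq 0$ we simply divide.

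\textbf{Step 2: $\star_M$ does not vanish identically on $\mathcal B^M_0$.} This is the step I expect to be the main obstacle, and I would handle it as follows.

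\begin{itemize}
\item First check that $\mathbf e_{I_M}$ has momentum zero inside $\Lambda V_{M+1}$. Since $L$ is even, $LM$ and $L(M+1)$ have the same parity. Thus $I_M$ is exactly the central block of $I_{M+1}$, obtained by deleting $L/2$ indices at each end, and its index sum is $0$.
\item Next, use \eqref{eq:omega-generating} for the $(M+1)$-particle spine: $\omega_{M+1}(x)\in\mathcal S_{M+1}$. Therefore $\omega_{M+1}(x_1)\wedge\cdots\wedge\omega_{M+1}(x_M)\in\mathcal B^M$, and by momentum selection its $\mathcal B^M_0$ component carries the $\mathbf e_{I_M}$ coordinate.
\item That coordinate is the $LM\times LM$ minor of the confluent Vandermonde matrix on rows of monomial degree $L/2,\dots,L/2+LM-1$. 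Factoring $x_i^{L/2}$ out of each row, and tracking the effect of the normalized derivatives through the Leibniz rule, should reduce this minor to a triangular change of basis applied to $\det\mathbf V(\mathbf x)$.
\item The result is a nonzero multiple of $\prod_i x_i^{L^2/2}\prod_{i<j}(x_j-x_i)^{L^2}$. This is not identically zero, so some element $Y\in\mathcal B^M_0$ has $\star_M Y=1$.
\end{itemize}

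If the Leibniz bookkeeping proves awkward, a fallback is available. Wedge $\omega_{M+1}(x_1)\wedge\cdots\wedge\omega_{M+1}(x_M)$ with a suitable $\epsilon$ built from $\omega_{M+1}(x_{M+1})$. Then extract the $\mathbf e_{I_M}$ coefficient by comparing the leading and trailing powers of $x_{M+1}$ in the full confluent Vandermonde determinant $\star\big(\omega(x_1)\wedge\cdots\wedge\omega(x_{M+1})\big)$.

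\textbf{Step 3: define the map.} Set
\[
\xi_{-q}(\mathbf e_{I_{M+1}}) := \frac{1}{\PF(\gamma_+)}\,\star_M\Big(\epsilon_q\wedge \tfrac{\gamma_+^{\wedge(M-1)}}{(M-1)!}\Big)\, Y ,
\]
and extend linearly. The required identity then follows immediately from $\star_M Y=1$.

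The degenerate case $\PF(\gamma_+)=0$ is handled by genericity. Both sides are polynomial in the amplitudes of $\gamma_+$, so one works over the open dense set where $\PF(\gamma_+)\neq0$, or equivalently with $\PF(\gamma_+)$ formally inverted.

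I would close by remarking that the existence statement is really the surjectivity of $\star_M$ on $\mathcal B^M_0$ established in Step 2. The substantive content, namely an expression for $\xi_{-q}$ in terms of the structure coefficients $C_P$ forming a conjugate spine, would come from expanding both sides in the $\epsilon_P$ basis.
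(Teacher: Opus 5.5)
Your argument is correct, and it takes a different route from the paper. The paper works on the coefficient expansion of $\Gamma_+$. It writes $\Gamma_+=\frac{\mathbf e_{I_{M+1}}}{(M+1)!}\sum C^{(M+1)}_{p_0,\ldots,p_M}\prod_i\widehat m_{+,p_i}$, sends $\mathbf e_{I_{M+1}}$ to $\mathbf e_{I_M}$, and replaces each $C^{(M+1)}_{p_0,\ldots,p_M}$ by $C^{(M)}_{q,p_1,\ldots,p_{M-1}}/\bigl(M(M+1)\,\widehat m_{+,p_0}\widehat m_{+,p_M}\bigr)$, which merges two momentum slots into the single mode $q$. What this buys is a formula in the universal structure coefficients, which motivates the conjugate-spine language and the gauge remark after the lemma. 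However, its asserted net effect is just $\Gamma_+\mapsto(\text{right-hand side})\,\mathbf e_{I_M}$, i.e.\ your rescaling on a one-dimensional space with $Y=\mathbf e_{I_M}$. Its formula also divides by individual moments $\widehat m_{+,p}$, which can vanish (e.g.\ odd moments of an even weight), whereas you need only $\PF(\gamma_+)\neq0$. Your Step~2 secures something the paper skips: $\mathbf e_{I_M}$ need not lie in $\mathcal B^M_0$. For $L=2$, $M=1$ one has $\mathcal B^1_0=\R\,\epsilon_0$ with $\epsilon_0=3\,\mathbf e_{-3/2}\wedge\mathbf e_{3/2}+\mathbf e_{-1/2}\wedge\mathbf e_{1/2}$, while your $Y$ does lie there. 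That said, Step~2 is more than you need. The element $Y:=\pi^M_0\bigl(\epsilon_q\wedge\gamma_+^{\wedge(M-1)}/(M-1)!\bigr)$ lies in $\mathcal B^M_0$ and has the right value of $\star_M$ by momentum selection, so $\xi_{-q}(\mathbf e_{I_{M+1}}):=\PF(\gamma_+)^{-1}Y$ works with no minor computation. Your Leibniz computation is still right, with $x_i^{L/2}$ factored out of the $i$th column block rather than out of rows. Finally, $\PF(\gamma_+)=0$ is not a genericity issue. In that case $\Gamma_+=0$, so every linear $\xi_{-q}$ returns $0$, while the right-hand side can be nonzero (take $\gamma_+$ the Gram form of a measure supported on $M-1$ distinct nonzero points). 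So $\PF(\gamma_+)\neq0$ is a genuine hypothesis of the lemma. It is harmless in the paper's use, where $\PF(\gamma_+)=\tau_{M+1}(\mathbf t')>0$.
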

\begin{proof}
We construct one such map by prescribing its action on the coefficient expansion of $\Gamma_+.$ 
\[
\star_M \left(\epsilon_q \wedge \frac{\gamma_+^{\wedge(M-1)}}{(M-1)!} \right) = \frac{1}{(M-1)!} \sum_{j_1 + \cdots + j_{M-1} = -q} C^{(M)}_{q, j_1, \ldots, j_{M-1}} \prod_{i=1}^{M-1} \widehat m_{+,j_i},
\]
where $\widehat m_{+, j_i}$ is the $j_i$-shifted moment of the measure determining $\gamma_+,$ and we have superscripted the structure coefficients to indicate they correspond to a system with $M$ particles. Looking at 
\[
\star_{M+1} \Gamma_+ = \frac{1}{(M+1)!} \sum_{p_0 + p_1 + \cdots + p_M=0} C^{(M+1)}_{p_0, \ldots, p_M} \prod_{i=0}^M \widehat m_{+, p_i},
\]
and hence, because $\star_{M+1}$ gives the coefficient of $\mathbf e_{I_{M+1}},$
\[
\Gamma_+ = \frac{\mathbf e_{I_{M+1}}}{(M+1)!} \sum_{p_0 + p_1 + \cdots + p_M=0} C^{(M+1)}_{p_0, \ldots, p_M} \prod_{i=0}^M \widehat m_{+, p_i}.
\]
Define the $\xi_{-q}$ by $\xi_{-q}(\mathbf e_{I_{M+1}}) = \mathbf e_{I_M}$ and
\[
\xi_{-q} \left(C^{(M+1)}_{p_0, \ldots, p_M}\right) = 
\frac{1}{M(M+1)} \frac{C^{(M)}_{q, p_1, \ldots, p_{M-1}}}{\widehat m_{+, p_0} \widehat m_{+, p_M}}
\]
which we extend by linearity on the coefficients of $\Gamma_+.$ Note that $\xi_{-q}$ acts on structure coefficients by choosing two particular momentum modes, $p_0$ and $p_M$ in the $M+1$ particle system, and replacing them with a single momentum mode $q.$ This provides a background-dependent map into the structure coefficients in the $M$ particle system. We remark that if $p_0 + p_M \neq q,$ then $C^{(M)}_{q, p_1, \ldots, p_{M-1}} = 0$ by momentum conservation.

That is,
\[
\xi_{-q} \circ \Gamma_+ := \xi_{-q}(\Gamma_+) = \frac{\mathbf e_{I_M}}{(M-1)!} \sum_{p_1 + \cdots + p_{M-1} = -q} C^{(M)}_{q, p_1, \ldots, p_{M-1}} \prod_{i=1}^{M-1} \widehat m_{+,p_i}.
\]
Application of $\star_M$ to both sides of this equation establishes the lemma.
\end{proof}

\begin{rem}
The choice of slots $0$ and $M$ in the definition of $\xi_{-q}$ is a gauge
choice: any pair of slots $(i,j)$ satisfying $p_i + p_j = q$ yields a map with
the same scalar output after applying $\star_M$, by the momentum
Pl\"ucker relations. To make the construction canonical one may symmetrize over
all $\binom{M+1}{2}$ such pairs, at the cost of a less explicit treatment of
the residual moment factors $\prod_{k \neq i,j} \widehat{m}_{+,p_k}$.
This non-uniqueness reflects the fact that $\xi_{-q}$ provides a
partial, background-dependent ``inversion'' of the Gram form $\gamma_+$, sufficient
for the purposes of deriving the Hirota identity but stopping short of a
canonical inverse; see Section~\ref{subsec:discussion.correlation-kernels}.
\end{rem}

We define the {\em conjugate momentum spine} $\mathcal S^{\vee}$ to be the span
of the $\xi_{-q}$ over all momentum modes. We call elements of $\mathcal
S^{\vee}$ {\em extractions}.

Insertion corresponds to wedging a momentum mode into a background, while
extraction is defined indirectly by requiring that it reproduce the effect of
removing a particle when evaluated on saturated backgrounds. In other words, extraction
operators are defined by adjunction with respect to the evaluation functional
rather than by contraction in the exterior algebra. Together, they form a pair 
of conjugate momentum spines capable of altering particle number in a controlled way.

\subsection{Momentum Transport Operators}
\label{subsec:insertion-extraction.transport-operators}

By acting simultaneously on a pair of independent backgrounds---inserting a particle 
into one while extracting a particle from the other---we can transfer momentum 
between systems while preserving their combined particle number. We refer to the 
tensor products that execute this transfer as {\em momentum transport operators}. 
These operators provide the exact algebraic mechanism needed to lift the static 
momentum Pl\"ucker relations into dynamic transport identities.

Toeplitz matrices act naturally on the momentum spine by shifting and weighting
momentum modes. Since the momentum Plücker relations are invariant under
Toeplitz substitutions, we may replace the insertion and extraction modes by
Toeplitz-weighted superpositions without destroying the algebraic relations.
Given a fixed Toeplitz matrix $T$ we define $\phi_{q} = T \epsilon_q \in
\mathcal S^{\wedge}$ and $\varphi_{-q} = T \xi_{-q} \in
\mathcal S^{\vee}.$ These are weighted superposition of particle insertions and
extractions. 

An element of $\mathcal S^{\wedge} \otimes \mathcal S^{\vee}$ acts on a pair of
backgrounds by inserting a particle into one background and extracting a
particle from the other. Tensor products of insertion and extraction operators
ultimately describe the transfer of momentum between particle systems while
preserving the total number of particles. In particular, the {\em one particle
$p,-q$-momentum transport} channel $\epsilon_p \otimes \varphi_{-q} \in \mathcal
S^{\wedge} \otimes \mathcal S^{\vee}$ is the operator acting on $\Gamma_-
\otimes \Gamma_+$ by
\[
(\epsilon_p \otimes \varphi_{-q})(\Gamma_- \otimes \Gamma_+) = (\epsilon_p \wedge \Gamma_-) \otimes (\varphi_{-q} \circ \Gamma_+) \quad \in \mathcal B^M \otimes \mathcal B^M.
\]
This deletes a particle with conjugate momentum $q$ from $\Gamma_+$ and inserts
a particle with momentum $p$ into $\Gamma_-$. 

To organize the transport operators by momentum transfer, we introduce
generating functions for insertion and extraction operators. By
\eqref{eq:omega-generating}, the generating function for insertions is simply
$\omega(z),$ while that for extractions is given by
\[
\Omega(z) = \sum_{q} z^{-c_M-q} \varphi_{-q}
\]
The (operator valued) generating function for one-particle momentum transport is given by
\[
(\omega(z) \otimes \Omega(z))(\Gamma_- \otimes \Gamma_+) = \sum_p \sum_q z^{p-q} (\epsilon_p \wedge \Gamma_-) \otimes (\varphi_{-q} \circ \Gamma_+).
\]
By constraining these transport channels with the momentum Pl\"ucker relations, we generate a rigid system of bilinear identities that govern the ensemble.

\subsection{Momentum Transport Relations}
\label{subsec:insertion-extraction.transport-relations}

While the conceptual mechanism of momentum transport is straightforward, the
full extraction algebra required to rigorously define it is highly complex, as
it must naturally accommodate multi-particle momentum exchanges. Fortunately,
the fundamental observables considered here only probe single-particle
extractions. We can therefore bypass the full algebraic complexity by
restricting our focus to a linear slice. By replacing the full extraction
algebra with an auxiliary structure that agrees with it exactly on this
slice---and in which the Pl\"ucker relations inherently vanish---we obtain a
highly efficient mathematical framework sufficient for deriving all one-particle
transport identities.

Define
\[
\kappa_n = \sum_{p-q=n} (\epsilon_p \otimes \varphi_{-q}) 
\]
Then, $\kappa_n$ is an operator acting on $\Gamma_- \otimes \Gamma_+,$ producing a background in $\mathcal B^M \otimes \mathcal B^M;$ 
\[
\kappa_n (\Gamma_- \otimes \Gamma_+) = \sum_{p-q=n}  (\epsilon_p \wedge \Gamma_- \otimes \varphi_{-q} \circ \Gamma_+).
\]
That is, $\kappa_n$ is the operator which transports net momentum $n$ from the right
background to the left via transport of one particle.

We look at observables that are supported on $\mathcal B_0^M \otimes \mathcal B_0^M$ via the bilinear functional $\Pi_M : \mathcal B \otimes \mathcal B \rightarrow [0, \infty)$ specified by
\[
\Pi_M (\Xi_1 \otimes \Xi_2) = \star_M \Xi_1 \cdot \star_M \Xi_2.
\]
The point is that the bilinear functional $\Pi_M$ detects only $M$-particle,
zero-momentum backgrounds. As a result, higher extraction terms and non-zero
momentum backgrounds are `off-shell' for $\Pi_M$, and only the momentum
conserving, linear transport channels are visible.

\begin{thm}[One-Particle Momentum Transport Relations]\label{thm:transport}
\[
\Pi_M \sum_{p-q=n} (\epsilon_p \otimes \varphi_{-q})(\Gamma_- \otimes \Gamma_+) = 0.
\]
\end{thm}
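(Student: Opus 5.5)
Since $\Pi_M$ is bilinear and factors as $\star_M(\cdot)\cdot\star_M(\cdot)$, I would first reduce the left-hand side to a single scalar expression. By the adjunction Lemma~\ref{adjunction}, applied after the Toeplitz substitution $\varphi_{-q}=T\xi_{-q}$ and extended linearly, each term $\star_M(\varphi_{-q}\circ\Gamma_+)$ equals $\star_M\big(T\epsilon_q\wedge \gamma_+^{\wedge(M-1)}/(M-1)!\big)$. The sum then becomes
\[
\sum_{p-q=n}\star_M\!\Big(\epsilon_p\wedge\frac{\gamma_-^{\wedge(M-1)}}{(M-1)!}\Big)\,\star_M\!\Big(T\epsilon_q\wedge\frac{\gamma_+^{\wedge(M-1)}}{(M-1)!}\Big).
\]
This lies in $\mathcal B\otimes\mathcal B$ as the image of the element $\sum_{p-q=n}\epsilon_p\otimes T\epsilon_q$ under the bilinear map $(\alpha,\beta)\mapsto \star_M(\alpha\wedge\Gamma_-)\,\star_M(\beta\wedge\Gamma'_+)$, where $\Gamma'_+=\gamma_+^{\wedge(M-1)}/(M-1)!$.

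The second step is to recognize $\sum_{p-q=n}\epsilon_p\otimes T\epsilon_q$ as a tensor-product form of the Toeplitz-deformed momentum Pl\"ucker element $r_{n,T}$ of Lemma~\ref{lemma:Plucker}. Reindexing $q\mapsto -q$ uses the symmetry $p\leftrightarrow -p$ of the momentum modes, coming from the symmetric index set $I$; the reflection can be absorbed into the Toeplitz matrix by replacing $T_k$ with $T_{-k}$, which is again Toeplitz. Concretely, I would pass to the generating-function form: the sum equals the $z^n$ coefficient of $(\omega(z)\otimes\Omega(z))(\Gamma_-\otimes\Gamma_+)$ after applying $\Pi_M$. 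I would then try to realize this coefficient as an evaluation of $\omega(z)\wedge \widetilde T\omega(z)$ on a suitable doubled background, which vanishes because $\omega(z)\wedge\omega(z)=0$ by \eqref{eq:omega-squared}, combined with the Toeplitz stability of Lemma~\ref{lemma:Plucker}.

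I expect the main obstacle to be this passage from the tensor product back to a wedge product. $\Pi_M$ evaluates the two factors against \emph{different} backgrounds $\Gamma_-$ and $\Gamma_+'$, whereas $r_{n,T}$ lives in a single wedge algebra. I would first try the auxiliary structure alluded to before the theorem. This means defining a linear map $\Phi:\mathcal A^2\to\R$ on the linear slice of one-particle channels by $\Phi(\epsilon_p\wedge\epsilon_q):=\star_M(\epsilon_p\wedge\Gamma_-)\star_M(\epsilon_q\wedge\Gamma_+')$. Then $\Pi_M\kappa_n(\Gamma_-\otimes\Gamma_+)=\Phi(r_{n,T})=0$ by Lemma~\ref{lemma:Plucker}.

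The remaining work is to check that $\Phi$ is well defined. Here only the one-particle channels are visible to $\Pi_M$, and all other terms are off-shell, since momentum conservation kills every term with $p\neq \text{(total momentum of }\Gamma_-\text{ complement)}$. Compatibility with the Pl\"ucker relations is imposed by construction, as the paper's description of the auxiliary structure indicates. Once that is granted, the theorem follows directly.
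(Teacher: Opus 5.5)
There is a genuine gap, and it sits exactly at the step you flagged. Your first reduction is fine: Lemma~\ref{adjunction} rewrites the left-hand side as $\sum_{p-q=n} a_p\,\star_M(T\epsilon_q\wedge\Gamma_+')$, where $a_p=\star_M(\epsilon_p\wedge\Gamma_-)$ and $\Gamma_+'=\gamma_+^{\wedge(M-1)}/(M-1)!$. The trouble is the functional $\Phi$, and it is not a technicality.

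$\mathcal A^2$ is the image of $\mathcal S\otimes\mathcal S$ under $\wedge$. The kernel of that map contains both $\epsilon_p\otimes\epsilon_q-\epsilon_q\otimes\epsilon_p$ and the tensors $\sum_{p+q=n}\epsilon_p\otimes T\epsilon_q$. The rule $\epsilon_p\wedge\epsilon_q\mapsto\star_M(\epsilon_p\wedge\Gamma_-)\,\star_M(\epsilon_q\wedge\Gamma_+')$ therefore descends to $\mathcal A^2$ only under two conditions. It must be symmetric in $p,q$, which fails as soon as $\Gamma_-\neq\Gamma_+'$. It must also annihilate those tensors, which means $\Phi(r_{n,T})=0$ must already hold. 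Saying that compatibility with the Pl\"ucker relations is ``imposed by construction'' thus assumes the conclusion. If you instead define $\Phi$ on $\mathcal S\otimes\mathcal S$, it is well defined, but the Pl\"ucker tensor is no longer zero there.

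The index bookkeeping does not close either. $r_{n,T}$ pairs $p+q=n$, whereas $\kappa_n$ pairs $p-q=n$. Composing the reflection $\epsilon_q\mapsto\epsilon_{-q}$ with a Toeplitz matrix gives a Hankel matrix, and replacing $T_k$ by $T_{-k}$ only transposes $T$. So Lemma~\ref{lemma:Plucker} never reaches the sum in the theorem.

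No choice of auxiliary structure can repair this, because the displayed quantity is not zero in general. Take $T=I$, $n=0$ and $\gamma_+=\gamma_-$. Your own first step turns the left-hand side into $\sum_p\star_M(\epsilon_p\wedge\Gamma_-)^2$. This is a sum of squares of the coefficients of the polynomial $\star_M(\omega(z)\wedge\Gamma_-)$. When $\gamma_-$ is the Gram form of a positive weight, that polynomial is, by \eqref{eq:R1}, essentially $Z\,R_1(z)/w(z)$, so it is not identically zero. For $M=1$ one has $\Gamma_-=1$, and the left-hand side is simply $(\star_1\epsilon_0)^2=\bigl(\prod_{j=1}^{L-1}j!\bigr)^2\neq0$.

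The paper's own proof does not supply the missing step. It transfers $r_n=0$ through $\Delta(\epsilon_p)=(\epsilon_p\otimes 1)+(1\otimes\varphi_{-p})$ into $\mathcal F^{\wedge}\otimes\mathcal F^{\vee}$. But $\Delta$ is a homomorphism on $\mathcal B$ only if $\Delta(r_n)=0$. In $\mathcal F^{\wedge}\otimes\mathcal F^{\vee}$ the value $\Delta(r_n)$ is the cross term $2\sum_{p+q=n}\epsilon_p\otimes\varphi_{-q}$, which is nonzero because the $\epsilon_p\otimes\varphi_{-q}$ are independent there. That is the same circularity as your $\Phi$, with the same $p+q$ versus $p-q$ mismatch. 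Your proposal is therefore essentially the paper's route and inherits its flaw. As stated, for arbitrary Toeplitz $T$, the theorem cannot be established by either argument.
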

\begin{proof}
Since $\Pi_M$ detects only the degree $0$ and degree $1$ insertion-extraction
channels, higher degree extractions are invisible to $\Pi_M$. We have not
explicitly defined what a higher degree extraction is, but we need not. For our
purposes it is enough to observe that they act on $\Gamma_+$ by producing a
background in $\mathcal B_0^m$ for some $m < M,$ and hence are undetectable by $\Pi_M.$ 

We may therefore replace the full algebra by a square-zero algebra in which all
quadratic products vanish. Let $\mathcal F^{\wedge}$ be the square-zero algebra
generated by $\mathcal S^{\wedge}$ and let $\mathcal F^{\vee}$ be the
square-zero algebra generated by $\mathcal S^{\vee}.$ In these reduced algebras,
the momentum Plücker relations hold identically. That is, 
\[
\sum_{p+q=n} \epsilon_p \epsilon_q  = 0 \in \mathcal F^{\wedge} \qq{and} \sum_{-p-q=n} \varphi_{-p} \varphi_{-q} = 0 \in \mathcal F^{\vee}.
\]

As vector spaces $\mathcal S^{\wedge}$ and $\mathcal S^{\vee}$ are isomorphic,
and hence $\mathcal F^{\wedge} \cong \mathcal F^{\vee}.$ We may therefore define the
coproduct-like homomorphism $\Delta : \mathcal B \rightarrow \mathcal F^{\wedge}
\otimes \mathcal F^{\vee}$ by
\[
\Delta(\epsilon_p) = (\epsilon_p \otimes 1) + (1 \otimes \varphi_{-p}),
\] 
which we extend by linearity and multiplication in $\mathcal F^{\wedge} \otimes \mathcal F^{\vee}.$ Then, because $r_n = 0,$
\begin{align*}
0 = \Delta(r_n) &= \sum_{p+q = n} \big( (\epsilon_p \otimes 1) + (1 \otimes \varphi_{-p}) \big)\big( (\epsilon_q \otimes 1) + (1 \otimes \varphi_{-q}) \big) \\
&= \sum_{p+q=n} (\epsilon_p \epsilon_q \otimes 1) + 2 \sum_{p-q=n} (\epsilon_p \otimes \varphi_{-q}) + \sum_{-p-q=n} (1 \otimes \varphi_{-p} \varphi_{-q}) \\
&= 2 \kappa_n,
\end{align*}
In particular,
\[
\kappa_n = \sum_{p-q=n} (\epsilon_p \otimes \varphi_{-q}) = 0 \quad \in \mathcal F^{\wedge} \otimes \mathcal F^{\vee}.
\]
This implies 
\[
\Pi_M \kappa_n (\Gamma_- \otimes \Gamma_+)  = \frac12 \Pi_M \Delta(r_n)(\Gamma_- \otimes \Gamma_+) = 0.
\]
Thus the one-particle momentum transport relations follow from the momentum
Plücker relations after projecting to the linear insertion-extraction slice
visible to $\Pi_M$. 
\end{proof}

To bridge the gap between these discrete algebraic transport identities and the
continuous differential equations of classical integrable systems, we must now
package these transport operations into generating functions.

\section{Hyperpfaffian Tau Functions and Deformed Miwa Coordinates}
\label{sec:tau-functions}

This generating function formulation is achieved by introducing continuous time
deformations into the external potential of the log-gas. By allowing the
underlying weight function to flow along an infinite sequence of auxiliary
times, the discrete algebraic acts of particle insertion and extraction are
smoothly translated into continuous shifts of the background moments. Under this
deformation, the static hyperpfaffian partition functions evolve into dynamic
$\tau$-functions. By repackaging these infinite time shifts via Miwa
coordinates, the transport operators can be evaluated against a spectral
parameter, allowing the structural identities of the momentum algebra to
naturally take the form of Hirota bilinear equations.

Let $\mathbf t = (t_0, t_1, \ldots)$ be a sequence of {\em times}. The time
variables provide coordinates on the momentum spine, and shifts in the
times correspond to insertion and extraction operations in the momentum algebra.
That is, if we define the weight
\[
w(\mathbf t; x) = \exp\left\{\sum_{i=0}^\infty t_i x^i\right\},
\]
and measure 
\[
\mu(\mathbf t; dx) = w(\mathbf t; x) \, dx,
\]
then we may define the Gram form $\gamma(\mathbf t)$ at time $\mathbf t$ by
\[
\gamma(\mathbf t) = \sum_p \widehat m_p(\mathbf t) \epsilon_p \qq{where} \widehat m_p(\mathbf t) = \int_{\R} x^{c_M + p} \, d\mu(\mathbf t; x)
\]

The $\tau$-function is the partition function for the $M$-particle ensemble with weight $w(\mathbf t; x),$
\[
\tau_{M}(\mathbf t) = \star_{M} \frac{\gamma(\mathbf t)^{\wedge M}}{M!}.
\]

To express these dynamic $\tau$-functions within the standard framework of
integrable systems, we must systematically package their infinite time shifts.
This parameterization will ultimately allow the momentum transport relations to
be cast as bilinear residue identities.

\subsection{Deformed Miwa Coordinates}
\label{subsec:tau-functions.miwa}

We achieve this parameterization by introducing Miwa coordinates, which 
encode discrete shifts across the infinite sequence of time variables 
into a single generating function. In this framework, negative translations of 
the Miwa times correspond precisely to the momentum shifts of particle insertions. 
However, because the underlying momentum algebra is bigraded by both momentum 
and particle number, a single spectral parameter is insufficient to capture the 
full dynamics. We must also introduce an additional fugacity parameter to 
rigorously track the changes in particle number associated with extractions. 
By mapping our discrete algebraic operators onto these continuous, two-parameter 
deformations, we demonstrate that the insertion and extraction generating 
functions constructed earlier are not ad hoc constructs, but arise exactly 
as the natural time-shifted $\tau$-functions of the ensemble.

Given spectral parameter $z$ we define the {\em Miwa coordinate} \cite{Miwa1982},
\[
[z] := \left(\frac{z^i}{i} \right)_{i \geq 1}.
\]
Note that 
\[
w(\mathbf t - L^2[z^{-1}]; x) = \left(1 - \frac{x}{z}\right)^{L^2} \, w(\mathbf t; x) = z^{-L^2} (z - x)^{L^2} w(\mathbf t; x).
\]
We introduce the time deformations $\boldsymbol \delta(u,z) = (\delta_k(u,z), k \geq 1)$ by
\[
\sum_{k \geq 1} \delta_k(u, z) x^k = \log \left((1-u) + u \left(1 - \frac{x}{z}\right)^{-L^2} \right).
\]
These are modified Miwa coordinates that introduce both a spectral variable $z,$ which keeps track of momentum, and a fugacity variable $u$ which tracks number of extractions used in momentum transport.

A negative Miwa shift by $\boldsymbol \delta(1,z)$ inserts a particle at spectral parameter $z.$
\begin{lemma}\label{lemma:negative-miwa}
Let ${\displaystyle \Gamma_-(\mathbf t) = \frac{\gamma(\mathbf t)^{\wedge(M-1)}}{(M-1)!},}$ then, 
\[
w(\mathbf t; z) \frac{\tau_{M-1}(\mathbf t - \boldsymbol{\delta}(1, z))}{\tau_M(\mathbf t)} = z^{-L^2(M-1)} \star_M \!\! \left( \omega(z) \wedge \Gamma_-(\mathbf t) \right).
\]
\end{lemma}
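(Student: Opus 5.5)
The plan is to compute both sides as integrals over $M-1$ particle positions and match them. The key identity is
\[
z^{-L^2(M-1)}\,\star_M\big(\omega(z)\wedge\Gamma_-(\mathbf t)\big)=\tau_{M-1}\big(\mathbf t-\boldsymbol\delta(1,z)\big).
\]
This identity alone does not close the lemma as printed: the left-hand side of the statement carries the prefactor $w(\mathbf t;z)/\tau_M(\mathbf t)$ and the right-hand side does not. I therefore intend to prove the statement with the right-hand evaluation read as the normalized one-particle evaluation of Lemma~\ref{lemma:correlations}, namely $\frac{w(\mathbf t;z)}{\tau_M(\mathbf t)}\star_M(\cdots)$, which by \eqref{eq:R1} equals $R_1(z)$ at time $\mathbf t$.

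\emph{Step 1 (the Miwa shift).} At $u=1$ the definition of $\boldsymbol\delta$ gives
\[
\sum_k\delta_k(1,z)x^k=-L^2\log(1-x/z)=L^2\sum_{k\ge1}\frac{x^k}{kz^k}.
\]
Hence $\boldsymbol\delta(1,z)=L^2[z^{-1}]$, and by the displayed Miwa identity
\[
w(\mathbf t-\boldsymbol\delta(1,z);y)=z^{-L^2}(z-y)^{L^2}w(\mathbf t;y).
\]
Because $L$ is even, $(z-y)^{L^2}=|z-y|^{L^2}$. Substituting this into the definition of $\tau_{M-1}$ gives
\[
\tau_{M-1}(\mathbf t-\boldsymbol\delta(1,z))=\frac{z^{-L^2(M-1)}}{(M-1)!}\int_{\R^{M-1}}\prod_{a<b}|y_b-y_a|^{L^2}\prod_a|z-y_a|^{L^2}\,d\mu^{M-1}(\mathbf t;\mathbf y).
\]

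\emph{Step 2 (the wedge side).} By the same Fubini/antisymmetry factorization used in Lemma~\ref{lemma:correlations}, I will write $\Gamma_-(\mathbf t)$ as $\frac{1}{(M-1)!}\int\omega(y_1)\wedge\cdots\wedge\omega(y_{M-1})\,d\mu^{M-1}$. Then the confluent Vandermonde identity gives
\[
\star_M\big(\omega(z)\wedge\omega(y_1)\wedge\cdots\wedge\omega(y_{M-1})\big)=\prod_a(y_a-z)^{L^2}\prod_{a<b}(y_b-y_a)^{L^2},
\]
and every factor here is nonnegative since $L$ is even. Comparing with Step 1 proves the key identity. Multiplying by $w(\mathbf t;z)/\tau_M(\mathbf t)$ and invoking \eqref{eq:R1} then yields the lemma in the normalized reading described above. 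Equivalently, the lemma as literally printed, with no weight on the right, is exactly the key identity once the factor $w/\tau_M$ is carried along on both sides.

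\emph{Main obstacle.} The delicate point is the index bookkeeping in Step 2. The objects $\gamma(\mathbf t)$ and $\omega(z)$ live in $\mathcal B$, which uses the $(M+1)$-particle indices $I_{M+1}$ and the shift $c_{M+1}$, whereas $\star_M$ reads the coefficient of $\mathbf e_{I_M}$ with its own centering $c_M$. I will check that restricting to the support $I_M\subset I_{M+1}$ gives exactly the $M$-particle confluent Vandermonde determinant. The two centerings differ by a uniform shift of $L/2$ on every fermionic index, so the discrepancy should appear only as a common power of the variables, $\prod_i x_i^{L\cdot L/2}$, which I expect to be absorbed by a normalization of the spine. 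If it is not absorbed, I would record the resulting power of $z$, and a matching moment shift, as a correction in the exponent $-L^2(M-1)$.
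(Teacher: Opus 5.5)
Your argument is correct for the identity you actually prove, and it follows the paper's route. The paper likewise uses $\boldsymbol\delta(1,z)=L^2[z^{-1}]$ to turn the weight into $z^{-L^2}(z-x)^{L^2}w(\mathbf t;x)$. It then matches the resulting $(M-1)$-fold integral with the wedge evaluation by recognizing $R_1(\mathbf t;z)$ and quoting \eqref{eq:R1}, whereas you unfold the factorization behind Lemma~\ref{lemma:correlations} directly.

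Your diagnosis of the prefactor is also right. The paper's chain actually ends at $z^{-L^2(M-1)}R_1(\mathbf t;z)=z^{-L^2(M-1)}\tfrac{w(\mathbf t;z)}{\tau_M(\mathbf t)}\star_M(\omega(z)\wedge\Gamma_-(\mathbf t))$, and its last display silently drops $w/\tau_M$. So what is really established is your key identity $\tau_{M-1}(\mathbf t-\boldsymbol\delta(1,z))=z^{-L^2(M-1)}\star_M(\omega(z)\wedge\Gamma_-(\mathbf t))$.

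Two small cleanups. First, the printed form is simply false. For $M=1$ its left side is $w(\mathbf t;z)/\tau_1(\mathbf t)$, while its right side is the constant $\star_1\omega(z)=1$. So replace your sentence calling it ``equivalent'' by a plain statement of the correction.

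Second, the centering issue is settled by a choice of convention, not absorbed by a normalization. The part of $\epsilon_p^{(M+1)}$ supported on $I_M\subseteq I_{M+1}$ is exactly $\epsilon_p^{(M)}$. Hence writing $\omega(z)$ and $\widehat m_p$ with the $M$-particle momentum offset $L^2(M-1)/2$ produces no stray powers; this is also the choice that makes $\tau_M=\star_M\gamma^{\wedge M}/M!$ the $M$-particle partition function. The $(M+1)$-particle offset would instead introduce the factor $z^{L^2/2}\prod_a y_a^{L^2/2}$. That is a genuine change of weight, and no rescaling of the spine removes it.
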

\begin{proof}
For the negative translation by $\boldsymbol{\delta}(1,z),$ We note that
\[
\sum_{k \geq 1} \delta_k(1, z) x^k = -L^2 \log\left(1 - \frac{x}{z} \right),
\]
and hence
\[
\tau_{M-1}(\mathbf t - \boldsymbol{\delta}(1, z)) = \tau_{M-1}(\mathbf t - L^2[z^{-1}]).
\]
It follows that
\begin{align*}
\tau_{M-1}(\mathbf t - L^2[z^{-1}]) &= \frac{z^{-L^2(M-1)}}{(M-1)!} \int_{\R^{M-1}} \prod_{m=1}^{M-1} (z - x_m)^{L^2} \cdot \prod_{j < k} (x_k - x_j)^{L^2} \, d\mu^{M-1}(\mathbf t; \mathbf x) \\
&= \frac{\tau_M(\mathbf t)}{w(\mathbf t; z)} z^{-L^2(M-1)} R_1(\mathbf t; z),
\end{align*}
where $R_1(\mathbf t; z)$ is the first correlation function in the $M$-particle ensemble with weight $w(\mathbf t; x).$ By Lemma~\ref{lemma:correlations},
\[
w(\mathbf t; z) \frac{\tau_{M-1}(\mathbf t - L^2[z^{-1}])}{\tau_M(\mathbf t)} = z^{-L^2(M-1)} \star_M \left( \omega(z) \wedge \Gamma_-(\mathbf t)\right). \qedhere
\]
\end{proof}

For $|q| \leq c_{M+1},$ define
\[
\varphi_{-q}(\mathbf t') = {L^2 + c_M + q - 1 \choose c_M + q} \sum_p \widehat m_{c_M + q - p}(\mathbf t') \xi_{-p} \qquad \in \mathcal S^{\vee}.
\]
The $\varphi_{-q}$ form a basis for $\mathcal S^{\vee}$ generated by
\[
\Omega(\mathbf t'; z) = \sum_{q} z^{-q-c_M} \varphi_{-q}(\mathbf t').
\] 

A positive translation by the Miwa coordinate $\boldsymbol{\delta}(u,z)$
generates momentum extractions, and the coefficient of $u^1$ corresponds to
momentum extractions given by the transport of one particle. 
\begin{lemma}\label{lemma:positive-miwa}
Let ${\displaystyle \Gamma_+(\mathbf t') = \frac{\gamma(\mathbf t')^{\wedge(M+1)}}{{(M+1)!}},}$ then,
\[
[u^1] \tau_{M+1}(\mathbf t' + \boldsymbol{\delta}(u,z)) = \star_M \left(\Omega(\mathbf t'; z) \circ \Gamma_+ \right).
\]
\end{lemma}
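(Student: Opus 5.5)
We will expand the left-hand side to first order in $u$ and the right-hand side from the definitions of $\Omega$ and $\xi_{-p}$, and match coefficients of powers of $z$.

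\emph{Left-hand side.} By definition of $\boldsymbol\delta(u,z)$, the shifted weight is
\[
w(\mathbf t'+\boldsymbol\delta(u,z);x) = w(\mathbf t';x)\Bigl((1-u)+u\bigl(1-\tfrac{x}{z}\bigr)^{-L^2}\Bigr).
\]
Write $f_z(x)=(1-x/z)^{-L^2}-1$, so the shifted measure is $(1+u f_z)\,d\mu(\mathbf t')$. Substituting into
\[
\tau_{M+1}=\frac{1}{(M+1)!}\int_{\R^{M+1}}\prod_{j<k}(x_k-x_j)^{L^2}\,d\mu^{M+1},
\]
and extracting the $u^1$ coefficient, symmetry of the integrand gives
\[
[u^1]\tau_{M+1}(\mathbf t'+\boldsymbol\delta(u,z)) = \frac{1}{M!}\int_{\R^{M+1}} f_z(x_0)\prod_{j<k}(x_k-x_j)^{L^2}\,d\mu^{M+1}.
\]
Equivalently, by the hyperpfaffian factorization of Section~\ref{subsec:exterior-algebra.partition-function} (Chen's lemma), this equals $\star_{M+1}\bigl(\gamma_f\wedge \gamma(\mathbf t')^{\wedge M}/M!\bigr)$ with $\gamma_f=\int f_z\,\omega\,d\mu$. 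Expanding $(1-x/z)^{-L^2}=\sum_{k\ge0}\binom{L^2+k-1}{k}z^{-k}x^k$ turns $\gamma_f$ into a $z^{-1}$-series whose coefficients are shifted moments $\widehat m_{p+k}(\mathbf t')$ weighted by these binomials. This is where the binomial $\binom{L^2+c_M+q-1}{c_M+q}$ appearing in $\varphi_{-q}$ should come from, once we set $k=c_M+q$.

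\emph{Right-hand side.} By Lemma~\ref{adjunction}, $\star_M(\xi_{-p}\Gamma_+)=\star_M\bigl(\epsilon_p\wedge\gamma^{\wedge(M-1)}/(M-1)!\bigr)$. Hence
\[
\star_M(\Omega\circ\Gamma_+) = \sum_q z^{-q-c_M}\binom{L^2+c_M+q-1}{c_M+q}\sum_p \widehat m_{c_M+q-p}\,\star_M\Bigl(\epsilon_p\wedge\frac{\gamma^{\wedge(M-1)}}{(M-1)!}\Bigr).
\]
The plan is then to rewrite the $(M+1)$-particle evaluation above as an $M$-particle one, and to show that the coefficient of each $z^{-k}$ on both sides agrees. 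The intermediate identity to aim for is that the $z^{-k}$ coefficient on the left,
\[
\binom{L^2+k-1}{k}\frac{1}{M!}\int x_0^k\prod_{j<k'}(x_{k'}-x_j)^{L^2}\,d\mu^{M+1},
\]
equals
\[
\binom{L^2+k-1}{k}\sum_p \widehat m_{k-p}\,\star_M\Bigl(\epsilon_p\wedge\frac{\gamma^{\wedge(M-1)}}{(M-1)!}\Bigr).
\]

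\emph{Main obstacle.} The difficulty is the particle-number mismatch: the left side lives in $\Lambda V_{M+1}$ with $M+1$ integrated particles, while the right side is an $M$-particle evaluation in which one extra moment factor $\widehat m_{k-p}$ accounts for the removed particle. To bridge this, I would start from the definition of $\xi_{-q}$ in the proof of Lemma~\ref{adjunction}. That map replaces the slots $p_0,p_M$ by a single slot $q$ and divides by $\widehat m_{+,p_0}\widehat m_{+,p_M}$. Pairing this division against the moment weights introduced by $f_z$, and using the slot freedom from the Remark following Lemma~\ref{adjunction} together with the Toeplitz invariance of Lemma~\ref{lemma:Plucker}, should collapse the $(M+1)$-structure coefficients onto $M$-structure coefficients. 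The sum $\sum_p \widehat m_{c_M+q-p}\xi_{-p}$ is a Toeplitz superposition, which is exactly the form Lemma~\ref{lemma:Plucker} handles.

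Two further checks will be needed. First, the $u^0$ term (from the $1$ in $1+uf_z$, equivalently the $k=0$ constant) must be tracked so that the $u^1$ coefficient is isolated correctly. Second, the $\star_{M+1}$ and $\star_M$ normalizations must be reconciled through the embedding $I_M\subseteq I_{M+1}$, including the centering shift $c_{M+1}-c_M=L/2$ in the moment indices. I expect this index bookkeeping, rather than any conceptual step, to be where the argument is most fragile; if it fails, the fallback is to adjust the definition of $\varphi_{-q}$ by the corresponding shift.
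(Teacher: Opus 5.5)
Your left-hand computation is correct and matches the paper's first step. The paper writes $\gamma(\mathbf t'+\boldsymbol\delta(u,z))=\gamma(\mathbf t')+u\sum_{k>0}z^{-k}\phi_k(\mathbf t')$ with $\phi_k=\binom{L^2+k-1}{k}\sum_p\widehat m_{k+p}\epsilon_p$, and reads off $[u^1]\tau_{M+1}(\mathbf t'+\boldsymbol\delta(u,z))=\sum_{k>0}z^{-k}\star_{M+1}(\phi_k\wedge\gamma^{\wedge M}/M!)$; this is your $\star_{M+1}(\gamma_f\wedge\gamma^{\wedge M}/M!)$. The gap is the step you call the main obstacle, and it is not index bookkeeping: the intermediate identity you aim for is false. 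Under $\mu\mapsto\lambda\mu$ (i.e.\ $t'_0\mapsto t'_0+\log\lambda$), your left-hand $z^{-k}$ coefficient is an integral over $M+1$ particles and scales by $\lambda^{M+1}$. By contrast, $\sum_p\widehat m_{k-p}\star_M(\epsilon_p\wedge\gamma^{\wedge(M-1)}/(M-1)!)$ scales by $\lambda^{M}$. For even $k$ the left side is strictly positive, since $L$ is even and $\mu(\mathbf t';dx)$ has a positive density, so the two cannot agree. None of your proposed tools can change this degree. The slot freedom in the Remark after Lemma~\ref{adjunction} and the Toeplitz invariance of Lemma~\ref{lemma:Plucker} are linear identities inside a fixed particle sector. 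The explicit coefficient formula for $\xi_{-q}$ divides out two of the $M+1$ moment factors and so still has degree $M-1$. Shifting indices in $\varphi_{-q}$ does not affect the degree. In fact, if Lemma~\ref{adjunction} is read literally, the two sides of the statement itself have degrees $M+1$ and $M$ in $\mu$.

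The paper never collapses $(M+1)$-particle coefficients onto $M$-particle ones. After reaching $\sum_{k>0}z^{-k}\star_{M+1}(\phi_k\wedge\gamma^{\wedge M}/M!)$, it invokes Lemma~\ref{adjunction} in the form $\star_{M+1}(\epsilon_p\wedge\gamma^{\wedge M}/M!)=\star_M(\xi_{-p}\circ\Gamma_+)$, substitutes, and regroups the result as $\star_M(\Omega(\mathbf t';z)\circ\Gamma_+)$. Lemma~\ref{adjunction} as stated has $\star_M(\epsilon_p\wedge\gamma^{\wedge(M-1)}/(M-1)!)$ on its right side, so the mismatch you detected is genuine, and the paper gets past it only through this identification. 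To turn your outline into a proof, make that normalization explicit: define $\xi_{-p}$ by $\star_M(\xi_{-p}\Gamma_+)=\star_{M+1}(\epsilon_p\wedge\gamma^{\wedge M}/M!)$. This is legitimate because $\xi_{-p}$ is background dependent and $\mathcal B^{M+1}_0$ is a line, and with it your expansion is already the whole argument. Two further points then need fixing:
\begin{itemize}
\item Your expansion gives $\int x^k\omega(x)\,d\mu=\sum_p\widehat m_{k+p}\epsilon_p$, so the matching requires $\widehat m_{c_M+q+p}$ rather than the $\widehat m_{c_M+q-p}$ in $\varphi_{-q}$. The paper's proof also produces $\widehat m_{k+p}$ and passes over this.
\item The left side is an infinite series in $z^{-1}$, so $q$ must run over all $q$ with $c_M+q\geq 1$, not only $|q|\leq c_{M+1}$.
\end{itemize}
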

\begin{proof}
For the positive translation we write
\begin{align*}
w(\mathbf t' + \boldsymbol{\delta}(u,z); x) &= \left( (1-u) + u \left(1 - \frac{x}{z}\right)^{-L^2} \right) w(\mathbf t'; x) \\
&= \left(1 + u \sum_{j \geq 1} {L^2 + j - 1 \choose j} \left(\frac{x}{z}\right)^j \right) w(\mathbf t'; x), 
\end{align*}
and
\[
\gamma(\mathbf t' + \boldsymbol{\delta}(u,z)) = \gamma(\mathbf t') + u \sum_{j \geq 1} {L^2 + j - 1 \choose j} z^{-j} \int_{\R} \omega(x) x^j \, d\mu(\mathbf t'; x).
\]
We define
\[
\phi_j(\mathbf t') = {L^2 + j - 1 \choose j} \sum_{p} \widehat m_{j+p}(\mathbf t') \epsilon_p; \qquad j \geq 0.
\]
so that
\[
\gamma(\mathbf t' + \boldsymbol{\delta}(u,z)) = \gamma(\mathbf t') + u \sum_{j > 0} z^{-j} \phi_j(\mathbf t').
\]
It follows that
\begin{align*}
\tau_{M+1}(\mathbf t' + \boldsymbol{\delta}(u,z)) &= \star_{M+1} \frac{1}{(M+1)!}\left(\gamma(\mathbf t') + u \sum_{j > 0} z^{-j} \phi_j(\mathbf t')\right)^{\wedge(M+1)} \\
&= \tau_{M+1}(\mathbf t') + \star_{M+1} \left(\sum_{\ell=1} \frac{u^{\ell}}{\ell!} \sum_{k > 0} z^{-k} \sum_{p_1 + \cdots + p_{\ell} = k} \bigg\{ \bigwedge_{i=1}^{\ell} \phi_{p_i}(\mathbf t') \bigg\} \wedge \frac{\gamma(\mathbf t')^{\wedge(M+1-\ell)}}{(M+1 - \ell)!} \right).
\end{align*}
The coefficient of $u^1$ is thus given by
\[
[u^1] \tau_{M+1}(\mathbf t' + \boldsymbol{\delta}(u,z)) = \sum_{k > 0} z^{-k} \star_{M+1} \left(\phi_k(\mathbf t') \wedge \frac{\gamma(\mathbf t')^{\wedge M}}{M!}\right).
\]
Next we use Lemma~\ref{adjunction} to write
\begin{align*}
\star_{M+1} \left(\phi_k(\mathbf t') \wedge \frac{\gamma(\mathbf t')^{\wedge M}}{M!}\right) &= {L^2 + k - 1 \choose k} \sum_p \widehat m_{k+p} \star_{M+1} \left(\epsilon_p \wedge \frac{\gamma(\mathbf t')^{\wedge M}}{M!} \right) \\
&= {L^2 + k - 1 \choose k} \sum_p \widehat m_{k+p}(\mathbf t') \star_{M} (\xi_{-p} \circ \Gamma_+(\mathbf t'))
\end{align*}
We conclude, 
\[
[u^1] \tau_{M+1}(\mathbf t' + \boldsymbol{\delta}(u,z)) = \star_M \left(\Omega(\mathbf t'; z) \circ \Gamma_+ \right). \qedhere
\]
\end{proof}

We have therefore established a complete dictionary between the discrete
algebraic algebra and the continuous integrable flows: negative Miwa
translations generate momentum insertions, while positive translations,
precisely tracked by the fugacity variable, generate extractions. Consequently,
the generating functions for these transport operators are not merely formal
power series; they are exactly the time-deformed $\tau$-functions of the
ensemble.

\section{The Hirota Bilinear Identity}
\label{sec:hirota}

By packaging the momentum transport relations into these continuous Miwa
coordinates, the generating functions for particle insertion and extraction
naturally take the form of primal and dual Baker--Akhiezer wave functions
\cite{AlexandrovZabrodin2013, MiwaJimboDate2000}. When the discrete one-particle
transport identities are evaluated using these wave functions, the geometric
constraints of the momentum Pl\"ucker relations condense into a single bilinear
residue identity. This identity is precisely the Hirota equation, which
completely characterizes the integrable hierarchy underlying the charge-$L$
ensemble.

We define the (primal) Baker--Akhiezer wave function 
\begin{align*}
\psi^-(\mathbf t; z) &:= z^{2 c_M} w(\mathbf t; z) \frac{\tau_{M-1}(\mathbf t - \boldsymbol \delta(1,z))}{\tau_M(\mathbf t)} \\
&= \star_M \left( \omega(z) \wedge \Gamma_- \right) \\
&= \sum_{p} z^{p + c_M} \star_M\left(\epsilon_p \wedge \Gamma_-\right).
\end{align*}
Similarly, the conjugate Baker-Akhiezer wave function is 
\begin{align*}
\psi^+(\mathbf t'; z) &:= [u^1] \tau_{M+1}(\mathbf t' + \boldsymbol{\delta}(u,z)) \\ 
&=  \sum_{q} z^{-q - c_M} \star_M \left(\varphi_{-q}(\mathbf t') \circ \Gamma_+ \right).
\end{align*}
We normalize the Baker–Akhiezer wave functions so that the generating series for insertions and extractions are Laurent series in $z$ whose constant term corresponds to momentum conservation.

Then,
\begin{align*}
\psi^-(\mathbf t; z) \psi^+(\mathbf t'; z) &= \sum_p \sum_q z^{p-q} \star_M\left(\epsilon_p \wedge \Gamma_- \right) \star_M \left(\varphi_{-q}(\mathbf t') \circ \Gamma_+ \right).
\end{align*}

The function $\psi^-(\mathbf t; z)$ is the generating function for particle
insertions into the $(M-1)$-particle background, while $\psi^+(\mathbf t'; z)$
is the generating function for momentum extractions from the $(M+1)$-particle
background. The product $\psi^-(\mathbf t; z)\psi^+(\mathbf t'; z)$ encodes all
one-particle momentum transport channels. Taking the coefficient of $z^0$
selects transport channels with zero net momentum transfer, and the momentum
transport relations imply that the sum of these channels vanishes. This is the
Hirota bilinear identity for hyperpfaffian $\tau$-functions in charge-$L$
ensembles. 
\begin{thm}
\[
[z^0] \left(\psi^-(\mathbf t; z) \psi^+(\mathbf t'; z) \right)= 0.
\]
We may renormalize the integrand by any scalar without disturbing
the relation. In particular, if more symmetry is desired between the positive
and negative translations of the $\tau$-functions, we may write,
\[
\oint z^{2 c_M} w(\mathbf t; z) \frac{\tau_{M-1}(\mathbf t - \boldsymbol \delta(1,z))}{\tau_M(\mathbf t)} \cdot [u^1] \frac{\tau_{M+1}(\mathbf t' + \boldsymbol{\delta}(u,z))}{\tau_M(\mathbf t')} \frac{dz}{z} = 0.
\]
\end{thm}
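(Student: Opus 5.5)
The plan is to reduce the statement directly to Theorem~\ref{thm:transport} with $n=0$, using the wave-function expansions already recorded before the theorem. First I will invoke Lemma~\ref{lemma:negative-miwa}. Multiplying both sides by $z^{L^2(M-1)}$ and using $2c_M = LM-1$, together with the expansion \eqref{eq:omega-generating} of $\omega(z)$, identifies $\psi^-(\mathbf t;z)$ with the generating series $\sum_p z^{p+c_M}\star_M(\epsilon_p\wedge\Gamma_-)$. Here the normalization $z^{2c_M}$ versus $z^{L^2(M-1)}$ is a scalar prefactor, which the statement explicitly allows us to renormalize away. Second, Lemma~\ref{lemma:positive-miwa} gives $\psi^+(\mathbf t';z)=\star_M(\Omega(\mathbf t';z)\circ\Gamma_+)=\sum_q z^{-q-c_M}\star_M(\varphi_{-q}(\mathbf t')\circ\Gamma_+)$.

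Next I will multiply the two series. Since both are finite sums (the modes vanish for $|p|>c_M$ or $|q|>c_{M+1}$), the product is a Laurent polynomial and there are no convergence issues. The coefficient of $z^0$ is therefore
\[
\sum_{p-q=0}\star_M(\epsilon_p\wedge\Gamma_-)\,\star_M(\varphi_{-q}(\mathbf t')\circ\Gamma_+)=\Pi_M\,\kappa_0(\Gamma_-\otimes\Gamma_+),
\]
using the definition of $\Pi_M$ as the product of the two evaluations and the definition of $\kappa_n$. Theorem~\ref{thm:transport} with $n=0$ then gives zero. The time dependence is harmless: $\Gamma_-$ is built from $\gamma(\mathbf t)$ and $\Gamma_+$ from $\gamma(\mathbf t')$. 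Theorem~\ref{thm:transport} only requires $\gamma_\pm\in\mathcal S_{M+1}$ to be arbitrary, and the Toeplitz-weighted $\varphi_{-q}(\mathbf t')$ is covered by Lemma~\ref{lemma:Plucker}.

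For the contour form, I will divide by $\tau_M(\mathbf t')$. This is a nonzero constant in $z$, so it does not change the vanishing of the constant term. I will then note that $\oint(\cdot)\,dz/z$ extracts exactly $[z^0]$, up to $2\pi i$.

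The main obstacle is the bookkeeping of the shifts: the $z^{\pm c_M}$ offsets and the different normalizations must cancel so that ``$[z^0]$'' truly corresponds to the channel $p=q$ ($n=0$) rather than some $n\neq 0$. If there is a mismatch, for instance a residual $z^{L^2(M-1)-2c_M}$ factor, I would instead argue that Theorem~\ref{thm:transport} holds for every $n$. Every coefficient of $\psi^-\psi^+$ then vanishes, and the claim holds regardless of which coefficient is extracted.
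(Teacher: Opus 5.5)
Your argument follows the paper's proof step for step. You expand $\psi^\pm$ via Lemmas~\ref{lemma:negative-miwa} and~\ref{lemma:positive-miwa}, read $[z^0](\psi^-\psi^+)$ as $\Pi_M\kappa_0(\Gamma_-\otimes\Gamma_+)$, and quote Theorem~\ref{thm:transport} at $n=0$. Everything therefore rests on that theorem, and your own fallback exposes the problem with it.

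Suppose $\Pi_M\kappa_n(\Gamma_-\otimes\Gamma_+)=0$ for every $n$, which is what the theorem and its $n$-uniform proof assert. Then $\psi^-\psi^+=0$ in the field $\R((z^{-1}))$, so one factor vanishes identically. But $\psi^-(\mathbf t;z)=\star_M(\omega(z)\wedge\Gamma_-)=\frac{1}{(M-1)!}\int\prod_a(z-y_a)^{L^2}\prod_{a<b}(y_b-y_a)^{L^2}\,d\mu^{M-1}(\mathbf t;\mathbf y)$ is positive on $\R$. And $\psi^+(\mathbf t';z)=\frac{1}{M!}\int\big((1-x_0/z)^{-L^2}-1\big)\prod_{i<j}(x_j-x_i)^{L^2}\,d\mu^{M+1}(\mathbf t';\mathbf x)$ has strictly positive $z^{-2}$-coefficient.

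The proof of Theorem~\ref{thm:transport} indeed does not stand. In $\mathcal F^{\wedge}\otimes\mathcal F^{\vee}$ the simple tensors $\epsilon_p\otimes\varphi_{-q}$ are linearly independent, so $\kappa_0\neq 0$ there. The argument reaches $\kappa_n=0$ only by treating $\Delta$ as a well-defined algebra map on $\mathcal B$. That presupposes $\Delta(r_n)=2\sum_{p+q=n}\epsilon_p\otimes\varphi_{-q}$ vanishes (note $p+q$, not $p-q$), so it assumes what it proves.

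So the $n=0$ channel you need is unsupported. With $\psi^\pm$ identified as you do, the identity actually fails. Take $L=M=2$, $\mathbf t=\mathbf t'$ and an even weight such as $e^{-x^2}$, with moments $m_k$. Set $I_j=\int x_0^j\prod_{0\le i<k\le 2}(x_k-x_i)^4\,d\mu^3$. Then:
\begin{itemize}
\item $\psi^-=m_0z^4+6m_2z^2+m_4$;
\item $[z^0]\psi^+=0$, and $[z^{-j}]\psi^+=\tfrac12\binom{3+j}{j}I_j$ for $j\ge 1$;
\item hence $[z^0](\psi^-\psi^+)=\tfrac12\left(35\,m_0I_4+60\,m_2I_2\right)>0$.
\end{itemize}
This is not a gap that more careful bookkeeping can close.

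Independently of this, your handling of the normalizations needs correcting. A factor $z^{2c_M-L^2(M-1)}$ is not a scalar. Multiplying by $z^k$ replaces $[z^0]$ by $[z^{-k}]$, a different channel, so it cannot be ``renormalized away''. The statement's clause covers only $z$-independent factors such as $\tau_M(\mathbf t')^{-1}$.

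The mismatch is an indexing slip. With $r_i=j_i+\frac{LM-1}{2}$, the $x$-degree of the Pl\"ucker coordinate is $\sum_i r_i-\binom{L}{2}=p+\frac{L^2(M-1)}{2}$. So the offset in \eqref{eq:omega-generating} must be $L^2(M-1)/2$, consistent with the paper's $2c_M+1=L^2(M-1)+1$, and then the powers agree.

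Your first step also inherits a misstatement. The computation inside the proof of Lemma~\ref{lemma:negative-miwa} actually gives $\tau_{M-1}(\mathbf t-\boldsymbol\delta(1,z))=z^{-L^2(M-1)}\star_M(\omega(z)\wedge\Gamma_-)$. So $z^{2c_M}w(\mathbf t;z)\,\tau_{M-1}(\mathbf t-\boldsymbol\delta(1,z))/\tau_M(\mathbf t)$ differs from $\star_M(\omega(z)\wedge\Gamma_-)$ by the $z$-dependent factor $w(\mathbf t;z)/\tau_M(\mathbf t)$. Hence the contour form in the statement is not a rescaling of the $[z^0]$ form, and dividing by $\tau_M(\mathbf t')$ does not transfer the result to it.
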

\begin{rem} 
The contour integral here is purely formal coefficient extraction, and meant
only to be suggestive of the classical form of the Hirota equation. 
\end{rem}
\begin{proof}
\begin{align*}
[z^0] \left(\psi^-(\mathbf t; z) \psi^+(\mathbf t'; z) \right) &= \sum_{p-q=0} \star_M\left(\epsilon_p \wedge \Gamma_- \right) \star_M \left(\varphi_{-q}(\mathbf t') \circ \Gamma_+ \right) \\
&= \Pi_M \sum_{p-q=0} (\epsilon_p \otimes \varphi_{-q}(\mathbf t'))(\Gamma_-(\mathbf t) \otimes \Gamma_+(\mathbf t')) = 0,
\end{align*}
by Theorem~\ref{thm:transport}.
\end{proof}

The Hirota bilinear equation ultimately arises from the single exterior algebra 
identity $\omega(z)\wedge\omega(z)=0$. This integrable structure is uncovered 
by expressing this geometric constraint in momentum coordinates, lifting it 
into dynamic transport between particle backgrounds via conjugate insertion 
and extraction operators, and finally packaging these operations as Miwa 
time deformations of the $\tau$-function.

\section{Discussion and Outlook}
\label{sec:discussion}

\subsection{Relation to Sato Theory and Integrable Hierarchies}
\label{subsec:discussion.sato-theory}

In the classical Sato theory, integrable hierarchies such as the
Kadomtsev--Petviashvili (KP) and Toda lattices are understood as dynamical
systems on an infinite-dimensional Grassmann manifold \cite{MiwaJimboDate2000,
Sato1981}. In that setting, the Hirota bilinear equations emerge directly from
the Pl\"ucker relations that define the Grassmannian embedding. In the present
work, the integrable structure of the charge-$L$ ensemble arises from an
analogous algebraic mechanism, but in a strictly finite-dimensional setting. The
dimensional reduction provided by the momentum algebra generates a finite
sequence of momentum Pl\"ucker relations originating from the blade identity
$\omega(z) \wedge \omega(z) = 0$.

A natural question is how this finite-dimensional momentum algebra maps onto the
standard classification of integrable hierarchies. Because the saturated
backgrounds possess a strict momentum conservation law (total momentum must be
zero), the associated $\tau$-functions satisfy rigid truncation conditions with
respect to the Miwa time shifts. Fully embedding these truncated
$\tau$-functions into the standard infinite-dimensional hierarchies, or
classifying the specific hierarchy reductions they represent for general $L$,
remains an open problem.

\subsection{Asymptotics and Large Particle Number Limits}
\label{subsec:discussion.asymptotics}

The results presented in this work focus exclusively on the exact algebraic
structure of the charge-$L$ ensemble for finite particle number $M$. The
momentum algebra provides a significant computational simplification in this
finite regime, reducing the degrees of freedom required to describe saturated
backgrounds from the combinatorial dimension $\binom{LM}{L}$ down to the linear
dimension $L^2(M-1)+1$. 

However, taking the thermodynamic limit $M \to \infty$ fundamentally alters the
mathematical landscape. In classical random matrix theory, such large-$M$ limits
lead to universal scaling regimes governed by the sine, Airy, or Bessel kernels.
In the present framework, as $M$ grows, the dimension of the momentum spine
diverges. Consequently, the finite-dimensional Sato analogue described here must
eventually transition into a true infinite-dimensional Grassmannian formulation. 

Extracting the asymptotic behavior of the hyperpfaffian $\tau$-functions in this
limit requires moving beyond the purely algebraic methods developed here and
entering the realm of asymptotic analysis. This would likely require developing
appropriate Riemann--Hilbert problems, loop equations, or large-$M$ saddle-point
approximations based on the shifted moments $\widehat m_p$. Furthermore, the
finite-dimensional momentum transport operators $\kappa_n$ defined in
Section~\ref{subsec:insertion-extraction.transport-relations} would
correspondingly need to be reformulated as continuous transport equations. We
leave the analytic challenge of passing the momentum Pl\"ucker relations and
Hirota equations through the large-$M$ limit as an open problem.

\subsection{Odd $L$}
\label{subsec:discussion.odd-L}

This work restricted attention to even $L$ to ensure that the $L$-forms
$\omega(x)$ commute, allowing the absolute values in the Gibbs measure to be
dropped without introducing alternating signs. However, the underlying
integrable structure is not strictly limited to the even $L$ regime, and
analogous Hirota formulas are expected to exist for odd $L$.

The primary obstacle for odd $L$ is combinatorial bookkeeping rather than a
fundamental algebraic barrier. When $L$ is odd, the $L$-forms anticommute,
meaning the algebraic avatars of the particles satisfy fermionic rather than
bosonic statistics. One natural algebraic strategy to resolve this is to
``double up'' the particles by wedging pairs of $L$-forms together to create
$2L$-forms. Because $2L$ is always even, these composite $2L$-forms commute,
effectively mapping the system back into a commutative subalgebra where a
modified adjunction and transport protocol can be applied. 

In this doubled regime, the integrals defining the Gram form will necessarily
pair particles together. Consequently, the analytic information of the odd-$L$
ensembles will be encoded in the skew moments of the measure, mirroring the
Pfaffian structure of the classical orthogonal ($\beta=1$) ensembles. Deriving
the explicit momentum Pl\"ucker relations and Hirota identities for this
skew-symmetric setting requires a modified momentum algebra, which we defer to
future work.

\subsection{Relationship with BKP}
\label{subsec:discussion.bkp}

It is a well-established feature of integrable probability that while $\beta=2$
ensembles are governed by determinantal point processes and the
Kadomtsev--Petviashvili (KP) hierarchy, the $\beta=1$ and $\beta=4$ ensembles
are governed by Pfaffian structures and are closely tied to the BKP hierarchy
\cite{MiwaJimboDate2000}. Our exterior algebra formulation provides a natural
geometric explanation for this algebraic shift. 

In the present framework, the $\beta=4$ ensemble corresponds exactly to the
$L=2$ case. When $L=2$, the fundamental algebraic avatars $\omega(x)$ are
2-forms. The hyperpfaffian of a Gram 2-form, as defined in
Section~\ref{subsec:exterior-algebra.partition-function}, reduces precisely to
the classical Pfaffian of a skew-symmetric matrix. Consequently, the momentum
Pl\"ucker relations governing the $L=2$ momentum algebra are exactly the
fundamental Pfaffian identities. 

Because the BKP hierarchy is ultimately generated by these Pfaffian algebraic
relations, the Hirota bilinear equations derived here for $L=2$ must project
directly onto the BKP hierarchy. Furthermore, for general even $L > 2$, the
hyperpfaffian structure generalizes the classical Pfaffian, suggesting that the
corresponding integrable structure is a higher-order ``hyper-BKP'' hierarchy.
Establishing the explicit dictionary between the Miwa-shifted momentum transport
operators constructed here and the neutral fermion vertex operators
traditionally used to formulate the BKP hierarchy remains an illuminating
direction for future research.

\subsection{Correlation Functions and Kernels}
\label{subsec:discussion.correlation-kernels}

While Lemma~\ref{lemma:correlations} provides an exact algebraic expression for
the correlation functions via insertion operators acting on the Gram background,
we have not yet extracted explicit analytic correlation kernels (such as the
determinantal or Pfaffian kernels seen in the classical $\beta = 1, 2, 4$
ensembles). In the classical setting, extracting the correlation kernel
typically relies on expressing the inverse of the moment matrix in terms of
orthogonal or skew-orthogonal polynomials, leading to a Christoffel--Darboux
formula. 

In the present exterior algebra framework, the analogous procedure would require
formally ``inverting'' the Gram form $\gamma$. The precise algebraic mechanism
for this inversion within the momentum algebra remains an open problem. However,
we hope that the full extraction algebra---extending beyond the linear
insertion-extraction slice $\Pi_M$ utilized in Theorem~\ref{thm:transport} to
derive the Hirota equations---will provide an alternative pathway to construct
these kernels. Utilizing higher-order extraction operators to bypass the direct
inversion of the Gram form was, in fact, the primary motivation for developing
the algebraic machinery presented here, and it remains a central goal for future
work.

\subsection{Circular Ensembles}
\label{subsec:discussion.circular}

While the exposition in this paper focused on log-gases on the real line, the
algebraic framework developed here applies equally well to circular ensembles
with $\beta = L^2$ even. In that setting, the shifted moments $\widehat m_p$ of the
measure on the real line are simply replaced by the Fourier coefficients of the
weight function on the unit circle. The underlying exterior algebra, the
confluentization process into $L$-blades, and the momentum Pl\"ucker relations
remain completely identical. The primary reason for restricting the present text
to the real line is one of exposition: presenting both situations simultaneously
introduces significant notational friction due to the divergent natural indexing
conventions required for the real line versus the circle.

Despite this presentational hurdle, circular ensembles are in many ways the
optimal setting for performing explicit algebraic calculations within this
framework. The rotational symmetry of the unit circle drastically simplifies the
structure of the momentum modes. In the fundamental case where the potential is
the uniform (Haar) measure on the unit circle, all non-zero Fourier coefficients
vanish. Consequently, the Gram form $\gamma$ collapses to a single momentum
mode. This massive reduction in complexity suggests that the circular ensembles
may be the most analytically tractable arena for computing explicit
hyperpfaffian correlation kernels and solving the associated transport
hierarchies in future work.

\subsection{Further Directions}
\label{subsec:discussion.further-directions}

The framework developed here opens several natural avenues for generalization. A
direct extension is the study of multi-species log-gas ensembles consisting of
particles with different integer charges, such as a mixture of charge-$L_1$ and
charge-$L_2$ particles. In the exterior algebra, this corresponds to backgrounds
formed by wedging $L_1$-blades with $L_2$-blades. While the associated momentum
algebra would necessarily be more complex---requiring a multi-graded structure
to track independent momentum conservation for each species---previous work on
mixed charge ensembles suggests these systems remain algebraically solvable
\cite{RiderSinclairXu2013}. We anticipate that charge and momentum-conserving
transport relations for these multi-species ensembles can be systematically
derived using the Berezin calculus.

On a broader structural level, the algebraic machinery constructed in this work
exhibits the distinct hallmarks of a discrete field theory. The generating
functions for momentum insertion and extraction, $\omega(z)$ and $\Omega(z)$,
behave analogously to vertex operators, and the transport identities they
satisfy are highly reminiscent of operator product expansions. Formalizing the
conjugate momentum spine and its transport relations into a rigorous vertex
operator algebra (VOA) would not only clarify the representation-theoretic
underpinnings of the charge-$L$ ensembles, but also provide a direct algebraic
bridge to the infinite-dimensional symmetries that govern classical integrable
hierarchies.

\bibliography{references}

\begin{center}
\noindent\rule{4cm}{.5pt}
\vspace{.25cm}

\noindent {\sc \small Christopher D.~Sinclair}\\
{\small Department of Mathematics, University of Oregon, Eugene OR 97403} \\
email: {\tt csinclai@uoregon.edu}
\end{center}

\end{document}